\newtheorem{definition}{Definition}[section]
\newtheorem{theorem}[definition]{Theorem}
\newtheorem{facten}[definition]{Fact}
\newtheorem{corollary}[definition]{Corollary}
\newtheorem{claim}[definition]{Claim}
\newtheorem{obs}[definition]{Observation}
\newtheorem{notat}[definition]{Notation}
\newtheorem{lemma}[definition]{Lemma}
\newtheorem{remark}[definition]{Remark}
\newtheorem{example}[definition]{Example}
\newcommand{\qed}{\hspace*{\fill} $\Box$}
\newenvironment{notation}{\begin{notat} \sl}{\end{notat}}
\newenvironment{proof}{\noindent {\bf Proof:} \hspace{.677em}}{\qed}
\newcommand{\id}{\text{Id}}
\newcommand{\su}{\subseteq}
\newcommand{\pr}{\Pr}
\newcommand{\sgn}{\operatorname {sgn}}
\newcommand{\DIST}{\operatorname{DIST}}
\newcommand{\SUM}{\operatorname{SUM}}
\newcommand{\GAP}{\operatorname{GAP-TR}}
\def\myvec#1{{\bf #1}}
\newcommand{\vecy}{{\myvec y}}
\newcommand{\vecz}{{\myvec z}}
\def\eps{{\varepsilon}}
\newcommand{\comment}[1]{}
\def\A{{\cal A}}
\def\C{{\cal C}}
\def\D{{\cal D}}
\def\I{{\cal I}}
\def\N{{\mathbb N}}
\def\P{{\cal P}}
\def\R{{\mathbb R}}
\def\T{{\cal T}}
\def\V{{\cal V}}
\newcommand{\eqdef}{\stackrel{\Delta}{=}}
\newcommand{\expected}{\mathbb E}
\newcommand{\corref}[1]{Corollary~\ref{cor:#1}}
\newcommand{\defref}[1]{Definition~\ref{def:#1}}
\newcommand{\lemref}[1]{Lemma~\ref{lem:#1}}
\newcommand{\secref}[1]{Section~\ref{sec:#1}}
\newcommand{\thmref}[1]{Theorem~\ref{thm:#1}}
\newcommand{\clmref}[1]{Claim~\ref{clm:#1}}
\newcommand{\exampleref}[1]{Example~\ref{ex:#1}}
\newcommand{\eqnref}[1]{Equation~(\ref{eqn:#1})}
\newcommand{\stpref}[1]{Step~(\ref{stp:#1})}
\newcommand{\notateref}[1]{Notation~\ref{notate:#1}}
\newcommand{\figref}[1]{Figure~\ref{fig:#1}}
\newcommand{\set}[1]{\left\{#1\right\}}
\newcommand{\remove}[1]{ \newpage \\ REMOVED TEXT \\ #1 \newpage}
\newcommand{\authornote}[2]{\noindent{\bf ({\rm\bf #1:} {\it #2})}}
\newcommand{\enote}[1]{\authornote{ERAN}{\textcolor{red}{#1}}\\}
\newcommand{\vect}[1]{(#1)}
\newcommand{\expon}[1]{\exp\left(#1\right)}
\newcommand{\View}{{\rm View}}
\newcommand{\Output}{{\rm Output}}
\newcommand{\Lap}{\mbox{\rm Lap}}
\newcommand{\GS}{\mbox{\rm GS}}
\newcommand{\RR}{{\mathbb R}}
\newcommand{\Var}{\mbox{\rm Var}}
\newcommand{\Exp}{{\mathbb E}}
\newcommand{\Flip}{\mbox{\rm flip}}
\begin{document}

\begin{titlepage}
\title{Distributed Private Data Analysis: \\ On Simultaneously Solving {\em How} and {\em What}\thanks{A preliminary version of this work appeared in David Wagner editor, Advances in Cryptology -- CRYPTO 2008. Volume~5157 of Lecture Notes in Computer Science, pages~451--468. Springer, 2008.} }

\author{{Amos Beimel \quad Kobbi Nissim \quad Eran Omri} \\ 
  Department of Computer Science \\ Ben Gurion University \\ Be'er Sheva, Israel \\
  {\tt \{beimel, kobbi, omrier\}@cs.bgu.ac.il}}


\maketitle

\begin{abstract} 

We examine the combination of two directions in the field of privacy concerning computations over distributed private inputs -- {\em secure function evaluation} (SFE) and {\em differential privacy}. While in both the goal is to privately evaluate some function of the individual inputs, the privacy requirements are significantly different.
The general feasibility results for SFE suggest a natural paradigm for implementing differentially private analyses distributively: First choose {\em what} to compute, i.e., a differentially private analysis; Then decide {\em how} to compute it, i.e., construct an SFE protocol for this analysis.

We initiate an examination whether there are advantages to a paradigm where both decisions are made simultaneously. In particular, we investigate under which accuracy requirements it is beneficial to adapt this paradigm for computing a collection of functions including binary sum, gap threshold, and approximate median queries. Our results imply that when computing the binary sum of $n$ distributed inputs then:
\begin{itemize}
\item
When we require that the error is  $o(\sqrt{n})$ and the number of rounds
is constant, there is no benefit in the new paradigm.
\item
When we allow an error of $O(\sqrt{n})$, the new paradigm yields more efficient protocols when we consider protocols that compute symmetric functions.
\end{itemize}
Our results also yield new separations between the local and global models of computations for private data analysis.
\end{abstract}

\paragraph{Keywords.} Differential privacy, Secure Function Evaluation, Sum Queries.
\thispagestyle{empty}
\end{titlepage}

\section{Introduction}
\label{sec:intro}

We examine the combination of two directions in the field of privacy concerning distributed private inputs -- secure function evaluation~\cite{Yao82a,GMW87,CCD88,BGW88} and differential privacy~\cite{DMNS06,Dwo06}. While in both the goal is to privately evaluate some function of individual inputs, the privacy requirements are significantly different.

Secure function evaluation (SFE) allows $n$ parties $p_1,\ldots,p_n$, sharing a common interest in distributively computing a function $f(\cdot)$ of their inputs ${\myvec x}=(x_1,\ldots,x_n)$, to compute $f({\myvec x})$ while making sure that no coalition of $t$ or less curious parties learns more than the outcome of $f({\myvec x})$. I.e., for every such coalition, executing the SFE protocol is equivalent to communicating with a trusted party that is given the private inputs ${\myvec x}$ and releases $f({\myvec x})$.
SFE has been the subject of extensive cryptographic research (initiated in~\cite{Yao82a,GMW87,CCD88,BGW88}), and SFE protocols exist for any feasible function $f(\cdot)$ in a variety of general settings.

SFE is an important tool for achieving privacy of individual entries -- no information about these entries is leaked beyond the outcome $f({\myvec x})$. However this guarantee is insufficient in many applications, and care must be taken in choosing the function $f(\cdot)$ to be computed -- any implementation, no matter how secure, of a function $f(\cdot)$ that leaks individual information would not preserve individual privacy. 

A criterion for functions that preserve privacy of individual entries, {\em differential privacy}, has evolved in a sequence of recent works~\cite{DiNi03,EGS03,DwNi04,BDMN05,DMNS06,Dwo06,DKMMN06}. It has been  demonstrated that differentially private analyses exist for a variety of tasks including the approximation of numerical functions (by adding carefully chosen random noise that conceals any single individual's contribution)~\cite{DMNS06,BDMN05,NRS07,GRS09}, non-numerical analyses~\cite{MT07}, datamining~\cite{BDMN05,NRS07}, learning~\cite{BDMN05,KLNRS08}, non-interactive sanitization~\cite{BLR08,DNRRS09,FKKN09}, and statistical analysis~\cite{DL09,Smi08}.

Employing the generality of SFE, we can combine these to directions in a natural paradigm for constructing protocols in which differential privacy is preserved:
\begin{enumerate}
\item Decide on {\em what} to compute. This can be, e.g., a differentially private analysis $\hat{f}(\cdot)$ that approximates a desired analysis $f(\cdot)$. Designing $\hat{f}(\cdot)$ can be done while abstracting out all implementation issues, assuming the computation is performed by a trusted party that only announces the outcome of the analysis.
\item Decide on {\em how} to compute, i.e., construct an SFE protocol for computing $\hat{f}({\myvec x})$ either by using one of the generic transformations of the feasibility results mentioned above, or by crafting an efficient protocol that utilizes the properties of $\hat{f}(\cdot)$. 
\end{enumerate}

This natural paradigm yields a conceptually simple recipe for constructing
distributed analyses preserving differential privacy, and, furthermore,
allows a valuable separation of our examinations of the {\em what} and {\em
how} questions. 

Comparing the privacy requirements of SFE
protocols with differential privacy suggests, however, that this combination may
result in sub-optimal protocols. For example, differential privacy is only
concerned with how the view of a coalition changes when one (or only few)
of the inputs are changed, whereas SFE protocols are required to keep these
views indistinguishable even when significant changes occur, if these
changes do not affect the computed function's outcome. Hence, it may be advantageous to consider a paradigm where the analysis to be 
computed and the protocol for computing it are chosen simultaneously.

\subsection{Our Underlying Models}

The main distributed model we consider is of $n$ honest-but-curious (a.k.a.\ semi-honest) parties $p_1,\ldots,p_n$ that are connected via a complete network and perform a computation over their private inputs $x_1,\ldots,x_n$. Privacy is required to be maintained with respect to all coalitions of size up to $t$. The model of honest-but-curious parties has been examined thoroughly in cryptography, and was shown to enable SFE in a variety of settings~\cite{Yao82a,GMW87,BGW88,CCD88}. We change the standard definition so that differential privacy has to be maintained with respect to coalitions of curious parties (see \defref{dist-dp} below). 

Another distributed model we consider is the {\em local model}%
\footnote{Also referred to in the literature as {\em randomized response} and {\em input perturbation}. 
This model was originally introduced by Warner~\cite{W65} as a means of encouraging survey responders to 
answer truthfully, and has been studied extensively since.}.
Protocols executing in the local model have a very simple communication structure, where each party $p_i$ can only communicate with a designated honest-but-curious party $C$, which we refer to as a {\em curator}. The communication can either be {\em non-interactive}, where each party sends a single message to the curator which replies with the protocol's outcome, or {\em interactive}, where several rounds of communication may take place. 

While it is probably most natural to consider a setting where the players are computationally limited (i.e., all are probabilistic polynomial time machines), we present our results in an information theoretic setting. This choice has two benefits:
\begin{itemize}
\item Technically, it allows us to prove lower bounds on SFE protocols (where similar bounds are not known for the computational setting). Hence, we can rigorously demonstrate when constructing differentially private protocols is better than using the natural paradigm. 
\item On the flip side, our bounds on the information theoretic model demonstrate, for the first time, a setting where reliance on computational hardness assumptions strictly improves the construction of differentially private analyses. 
\end{itemize}

\subsection{Our Results}

We initiate an examination of the paradigm where an analysis and the protocol for computing it are chosen simultaneously. We begin with two examples that present the potential benefits of using this paradigm: it can lead to simpler protocols, and more importantly it can lead to more efficient protocols. For the latter we consider the Binary Sum function, 
$$\SUM(x_1,\ldots,x_n) = \sum_{i=1}^n x_i\;\;\; \mbox{for}~x_i\in\{0,1\}.$$

The major part of this work examines whether constructing non-SFE protocols for
computing an approximation $\hat{f}(\cdot)$ to $\SUM(\cdot)$ yields an efficiency gain%
\footnote{We only consider {\em oblivious protocols} where the communication pattern is 
independent of input and randomness (see \secref{dpp}).}.  
Ignoring the dependency on the privacy parameter, our first observation is that for approximations 
with additive error $\approx \sqrt{n}$ there is a gain -- for a natural class of {\em symmetric}
approximation functions (informally, functions where the outcome does not
depend on the order of inputs), it is possible to construct differentially
private protocols that are much more efficient than any SFE protocol for a
function in this class.  Moreover, these differentially private protocols
are secure against coalitions of size up to $t=n-1$, and need not rely on
secure channels.

The picture changes when we consider additive error smaller than $\sqrt n$. 
This follows from a sequence of results:
\begin{enumerate}
\item We prove first that no such non-interactive protocols in the local model
  exist.
      Furthermore, no local protocols with $\ell \leq \sqrt{n}$ rounds and additive error 
      $\sqrt{n}/\tilde{O}(\ell)$ exist. 
\item We show that in particular, no local interactive protocol with 
      $o(\sqrt{n/\log n })$ rounds exists for computing $\SUM(\cdot)$ within constant additive 
      error (this is in contrast to the centralized setup where $\SUM(\cdot)$ can be 
      computed within $O(1)$ additive error).
\item Finally, we prove that the bounds on local protocols imply that no
                        distributed protocols exist that use $nt/4$ messages, and approximates $\SUM(\cdot)$ within additive 
                        error $\sqrt{n}/\tilde{O}(\ell)$ in $\ell$ rounds. 
\end{enumerate}
Considering the natural paradigm, i.e., computing a
differentially-private approximation to $\SUM(\cdot)$ using SFE, we get a protocol for
approximating $\SUM(\cdot)$ with $O(1)$ additive error, and sending $O(nt)$ messages. 
Thus, for protocols with error $o(\sqrt{n}/\eps)$ and small number of rounds, there 
is no gain in using the paradigm of a simultaneous design of the function and its protocol.

Our results imply that differentially private protocols constructed under computational hardness assumptions, yielding a computational version of differential privacy (see Definition~\ref{def:comp-dist-dp}), are provably more efficient than protocols that do not make use of computational hardness. 
For instance, the phase transition we observe at $\theta(\sqrt{n}/\eps)$ additive error {\em does not} hold in a computational setting. See \exampleref{comp} for details.

\subsection{Techniques}
We prove our lowerbound for the distributed model in a sequence of reductions. We begin with a simple reduction from any differentially private protocol for 
$\SUM$ to a gap version of the threshold function, denoted $\GAP$. Henceforth, it is enough to prove our lowerbound for $\GAP$. 

In the heart of our lowerbound for $\GAP$ is a transformation from efficient
distributed protocols into local interactive protocols, showing that if
there are distributed differentially-private protocols for $\GAP(\cdot)$ in which half of the parties interact with less than
$t+1$ parties, then there exist differentially-private protocols for $\GAP(\cdot)$ in the local interactive model. This allows us
to prove our impossibility results in the local model, which is considerably simpler to analyze. 

In analyzing the local non-interactive model, we prove lowerbounds borrowing from analyses in~\cite{DiNi03,DwNi04}. The main technical difference is that our analysis is a lowerbound and hence should hold for general protocols, whereas the work in~\cite{DiNi03,DwNi04} was concerned with proving feasibility of privacy preserving computations (i.e., upperbounds), and hence they analyze of very specific protocols.

To extend our lowerbounds from the local non-interactive  to
interactive protocols, we decompose an $\ell$-round interactive protocol to
$\ell$ one-round protocols, analyze the $\ell$ protocols,
and  use composition to obtain the lowerbound.

\subsection{Related Work}

Secure function evaluation and private data analysis were first tied together in the {\em Our Data, Ourselves (ODO)} 
protocols~\cite{DKMMN06}. The constructions in~\cite{DKMMN06}
-- distributed SFE protocols for generating shares of random noise 
used in private data analyses -- follow the natural paradigm discussed above (however, they avoid utilizing 
generic SFE feasibility results to gain on efficiency). We note that a  difference between the protocols 
in~\cite{DKMMN06} and the discussion herein is that ODO protocols are secure against malicious parties, in a 
computational setup, whereas we deal with honest-but-curious parties, and mostly in an information theoretic setup.
Following our work, computational differential privacy was considered in~\cite{MPRV09}; they present several definitions of computational differential privacy,
study the relationships between these definitions, and construct efficient 2-party computational differentially private protocols for approximating the distance between two vectors. In this work, we supply a definition of computationally
$(t,\epsilon)$-differentially  private protocols which is close to the definition of IND-CDP privacy in~\cite{MPRV09}. 

Lowerbounds on the local non-interactive model were previously presented
implicitly in~\cite{DMNS06,RHS07,KLNRS08}, and explicitly
in~\cite{DiNi03,DMT07}. The two latter works are mainly concerned with what
is called the global (or centralized) interactive setup, but have also
implications to approximation to $\SUM$ in the local {\em non-interactive}
model, namely, that it is impossible to approximate it within additive
error $c\sqrt{n}$ (for some constant $c>0$), a slightly weaker result
compared to our lowerbound of $c\sqrt{n}/\eps$ for
$\eps$-differentially private local non-interactive protocols. However,
(to the best of our understanding) these implications
of~\cite{DiNi03,DMT07} do not imply the lowerbounds we get for local
interactive protocols and distributed protocols.

Chor and Kushilevitz~\cite{CK93} consider the problem of securely computing
modular sum when the inputs are distributed. They show that this task can
be done while sending roughly $n(t+1)/2$ messages. Furthermore, they prove
that this number of messages is optimal for a family of protocols that they
call oblivious. These are protocols where the communication pattern is fixed
and does not depend on the inputs or random inputs. In our work we extend their
lowerbound result and prove that with $n(t+1)/4$ messages no symmetric approximation
for $\SUM$ with sub-linear additive error can be computed in an oblivious protocol. 

\subsection{Organization}

The rest of the paper is organized as follows: In \secref{preliminaries} we
define differentially private analyses and its extension to differentially
private protocols (both information-theoretic and computational), describe the local model of communication, and define the
binary sum and gap threshold functions.  In \secref{motivatingExamples}, we
present two motivating examples for our new methodology of simultaneously
solving how and what. 
In \secref{local} we prove lowerbounds on the error of differentially
private protocols for computing the
binary sum and gap threshold functions in the local model, and in
\secref{dist} we extend these lowerbounds to the distributed model.
Finally, in \secref{SymmetricFunctions} we prove that an SFE protocol for computing a symmetric approximation of the sum function with less than
$nt/4$ messages has an error of $\Omega(n)$ (compared to a non-SFE protocol
that approximates the sum function with 
$O(n)$ messages and an error of $\Omega(\sqrt{n})$).

\section{Preliminaries}
\label{sec:preliminaries}

\paragraph{Notation.}  A {\em vector} ${\myvec x}=(x_1,\ldots,x_n)$ is an ordered sequence of $n$ elements of some domain $D$.  
Vectors $\myvec{x},\myvec{x'}$ are {\em neighboring} if they differ on exactly one entry, and are {\em $T$-neighboring} if they differ on a single entry whose index is {\em not} in $T\subset [n]$.

The {\em Laplace distribution}, $\Lap(\lambda)$, is the continuous probability distribution with probability 
density function $$h(y) =\frac{\exp(-|y|/\lambda)}{2\lambda}.$$  
For $Y\sim\Lap(\lambda)$ we have that $\expected[Y]=0$, $\Var[Y]=2\lambda^2$, and $\Pr[|Y|> k\lambda] = e^{-k}$.

\begin{definition} \label{def:randomizedFunction}
        Let $D_I$, $D_R$, and $R$ be sets. 
        An $n$-ary {\em randomized function} is a function $\hat{f}: \left (D_I\right)^n  \times D_R \rightarrow R$, 
        where $D = D_I$ is the domain of $\hat {f}$ and $D_R$ is the set of random inputs. 
        For $\myvec {x} = \vect{x_1, \ldots, x_n} \in D^n$ we usually write $\hat{f}(\myvec x)$ with the underlying 
convention that $\hat{f}(x_1, \ldots, x_n) = \hat{f}(x_1, \ldots, x_n,r)$, where  $r$
        is  uniformly selected form $D_R$. Following this convention, we also usually omit $D_R$
        from the notation and write $\hat{f}: D^n \rightarrow R$.
\end{definition}

\subsection{Differential Privacy} 
\label{sec:dp}
Our privacy definition for distributed protocols (\defref{dist-dp} below) can be viewed
as a distributed variant of $\eps$-differential privacy. Informally, a computation is
differentially private if any change in a single individual input may only
induce a small change in the distribution on its outcomes.

\begin{definition}[Differential privacy~\cite{DMNS06}]\label{def:dp}
Let $\hat{f} : \D^n \rightarrow R$ be a randomized function from domain
$D^n$ to range $R$. We say that $\hat{f}$ is {\em
$\eps$-differentially private} if for all neighboring vectors ${\myvec
x},{\myvec x'}$, and for all possible sets of outcomes $\V \subseteq R$ it
holds that
$$\Pr[\hat{f}({\myvec x}) \in \V] \leq e^\eps\cdot \Pr[\hat{f}({\myvec x'}) \in \V].$$
The probability is taken over the randomness of $\hat{f}$.
\end{definition}

Several frameworks for constructing differentially private functions by
means of perturbation are presented in the literature
(see~\cite{DMNS06,BDMN05,NRS07,MT07}). 
The most basic transformation on a function $f$ that yields a
differentially private function  is via the framework of {\em global
sensitivity}~\cite{DMNS06}. In this framework the outcome is obtained by adding to $f({\myvec x})$ noise sampled from the Laplace 
distribution, calibrated to the global sensitivity of $f$, defined as 
$$\GS_f = \max |f({\myvec x})-f({\myvec x'})|\text{, with the maximum taken over neighboring }{\myvec x},{\myvec x'}.$$
Formally, $\hat{f}$ is defined as
\begin{equation}
\label{eqn:gsmech}
\hat{f}({\myvec x}) = f({\myvec x}) + Y\text{, where }Y\sim\Lap(\GS_f/\eps).
\end{equation}

\begin{example}
\label{ex:binary-sum}
The {\em binary sum} function $\SUM: \{0,1\}^n \rightarrow \RR$ is defined as $\SUM({\myvec x}) = \sum_{i=1}^n x_i$. 
For every two neighboring ${\myvec x},{\myvec x'} \in \{0,1\}^n$ we have that 
$|\SUM({\myvec x}) - \SUM({\myvec x'})| = 1$ and hence $\GS_{\SUM} = 1$. 
Applying \eqnref{gsmech}, we get an  $\eps$-differentially private approximation, 
$ \hat{f}({\myvec x}) = \SUM({\myvec x}) + Y \mbox{\rm, where\ } Y \sim \Lap(1/\eps)$, that is,
we get a differentially private approximation of \/ $\SUM$ with $O(1)$ additive error.
\end{example}

\subsection{Differentially Private Protocols}
\label{sec:dpp}

We consider a distributed setting, where $n$ parties $p_1,\dots,p_n$ hold private inputs $x_1,\ldots,x_n$ respectively and engage in a protocol $\Pi$ in order to compute (or approximate) a function $f(\cdot)$
of their joint inputs. 
Parties are {\em  honest-but-curious}, which means they follow the prescribed
randomized protocol. However, as the execution of the protocol terminates, 
colluding parties can try to infer information about inputs of parties 
outside the coalition, given their joint view of the execution.

The protocol $\Pi$ is executed in a synchronous environment with point-to-point secure (untappable) communication channels, and is required to preserve privacy with respect to coalitions of up to $t$ parties. Following~\cite{CK93}, we assume that the protocol $\Pi$ has a {\em fixed-communication} pattern (such protocols are called {\em oblivious}), i.e., every channel is either (i) active in every run of $\Pi$ (i.e., at least one bit is sent over the channel), or (ii) never used%
\footnote{Our proofs also work in a relaxed setting where every channel is
either (i) used in at least a constant fraction of the runs of $\Pi$ (where
the probability is taken over the coins of $\Pi$), or (ii) is never
used.}.
Parties that are adjacent to at least $t+1$  active channels are called {\em popular} other parties are called {\em lonely}.

The main definition we will work with is an extension of \defref{dp} to a distributed setting. 
Informally, we require that differential privacy is preserved with respect to any coalition of size up to $t$.

\begin{definition}[Distributed differential privacy] \label{def:dist-dp}
Let $\Pi$ be a protocol between $n$ (honest-but-curious) parties. For a set $T\subseteq [n]$
and fixed inputs $\myvec{x} = \vect{x_1,\ldots,x_n}$, let
$\View_T\vect{x_1,\ldots,x_n}$ be the random variable containing the inputs of the parties in $T$ (i.e., $\set{x_i}_{i\in T}$), the random inputs of the parties in $T$, and the messages that the parties
in $T$ received during the execution of the protocol with private inputs ${\myvec x} = \vect{x_1,\ldots,x_n}$
(the randomness is taken over the random inputs of the parties not in $T$). 

We say that $\Pi$ is {\em $(t,\eps)$-differentially private} if for all $T \subset [n]$, where $|T| \le t$, 
for all $T$-neighboring ${\myvec x},{\myvec x'}$, and for all possible sets $\V_T$ of views of the parties in $T$:
\begin{equation}
\Pr[\View_T({\myvec x}) \in \V_T] \leq e^\eps\cdot \Pr[\View_T({\myvec x'}) \in \V_T],
\label{eqn:dist-dp}
\end{equation}
where the probabilities are taken over the random inputs of the parties in the
protocol $\Pi$.
\end{definition}

An equivalent requirement is that for all $T \subset [n]$, where $|T| \le t$, 
for all $T$-neighboring ${\myvec x},{\myvec x'}$, and for all distinguishers $D$ (i.e., functions, not necessarily efficiently computable, from views to $\{0,1\}$),
$$\Pr[D(\View_T({\myvec x})) =1] \leq e^\eps\cdot \Pr[D(\View_T({\myvec x'})) =1 ].$$ 
This requirement can be relaxed to only consider distinguishers that are computationally bounded:

\begin{definition}[Computational distributed differential privacy]
\label{def:comp-dist-dp} 
We say that $\Pi$ is {\em computationally $(t,\eps)$-differentially  private} if for every probabilistic 
polynomial-time algorithm $D$, and for every polynomial $p(\cdot)$, there exists $k_0$ such that for all 
$k \ge k_0$, for all $T \subset [n]$, where $|T| \le t$, and for all $T$-neighboring inputs
$\myvec {x},\myvec {x'}\in \vect{\set{0,1}^k}^n${\upshape :}
$$\Pr[D(\View_T({\myvec x})) =1]  \leq  e^\eps\cdot \Pr[D(\View_T({\myvec x'})) =1 ] + \frac{1}{p(n\cdot k)}\;,$$
where the probabilities are taken over the random inputs of the parties in
protocol $\Pi$ and the randomness of $D$.
\end{definition}

\begin{example}
\label{ex:comp}
We next describe a computationally $(n/2,\eps)$-differentially  private protocol for computing $\SUM$ with $O(\log n/\epsilon)$ additive error, $O(n)$ messages, and constant number of rounds.
This protocol uses a homomorphic encryption scheme with threshold decryption (that is, only the sets of all parties can decrypt messages). For example, if we use ElGamal encryption, the distributed key generation and decryption require one round in which each party sends one message. 
The protocol works in three phases:
\begin{description}
\item[Key Generation.]  The parties generate public and private keys for the homomorphic encryption scheme with threshold decryption.                        
\item[Encryption.]
  Each party $p_i$ chooses a random ${\rm noise}_i$ (according to a distribution
  that will be defined  later), computes $y_i=x_i+{\rm noise}_i$, encrypts $y_i$ using the public encryption key and sends the encryption  to $p_1$.  
\item[Decryption.]
Party $p_1$ computes $z$, an  encryption of $y=\sum_{i=1^n} y_i$
(this is possible as the encryption scheme is homomorphic).
$p_1$ sends $z$ to each $p_i$, which in return sends a decryption message back
to $p_1$.
Finally, $p_1$ decrypts $y$ from the decryption messages and sends $y$ to all
parties.
\end{description}
One way to generate each party's noise is for each party to sample from the
Normal distribution with mean zero and variance
$6\log^2n/(n\eps^2)$. Since the sum of normal random variables is a
normal random variable, $y=\sum_{i=1^n} x_i +{\rm noise}$ where noise is sampled from a normal distribution with mean zero and variance
$6\log^2n/\eps^2$. Furthermore, even if a coalition of $n/2$ parties
subtracts the noise that its parties added to $y$, the variance of the remaining noise is
$3\log^2n/\eps^2$. Using the analysis of~\cite{DKMMN06}, the protocol
is a computationally $(n/2,\eps)$-differentially private protocol
which with constant probability has error  $O(\log n/\epsilon)$.

The above protocol is a computationally $(n/2,\eps)$-differentially private protocol for computing $\SUM$ with $O(\log n/\epsilon)$ additive error, $O(n)$ messages, and constant number of rounds. In contrast, we prove that $(n/2,\eps)$-differentially information-theoretically private protocol for computing $\SUM$ with $o(\sqrt{n})$ additive error and constant number of rounds must send $\Omega(n^2)$ messages. Thus, our results shows that requiring only
computational differentially-privacy
does result in  more efficient protocols.
\end{example}

Using standard SFE feasibility results (in the computational setting), it is possible now to prove that the natural paradigm presented in \secref{intro} yields protocols that adhere to \defref{comp-dist-dp}. 
Consider an $\epsilon$-differentially private data analysis $\hat{f}$ and a computationally bounded distinguisher $D$, trying distinguish between a computation of an SFE protocol computing $\hat{f}$ with neighboring inputs $\myvec {x}$ and $\myvec {x'}$. Since, $\hat{f}$ preserves differential privacy the distributions on the outputs must be $\eps$ close, the same must hold for the random variables describing
the adversary's view (up to some negligible function in the length of the (concatenated) inputs). We get:
\begin{lemma}[Informal]\label{lem:dp+sfe=dpsfe}
Let $\hat{f}$ be $\eps$-differentially private, and let $\Pi$ be a $t$-secure protocol computing $\hat{f}$, then $\Pi$ is computationally $(t,\eps)$-differentially private.
\end{lemma}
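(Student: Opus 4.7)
The plan is to use a standard SFE simulation argument together with the post-processing property of differential privacy. Because $\Pi$ is $t$-secure (in the honest-but-curious setting) for the functionality $\hat{f}$, the SFE definition gives, for every coalition $T\subseteq[n]$ with $|T|\le t$, a PPT simulator $S_T$ such that $S_T(\{x_i\}_{i\in T},\hat{f}(\myvec x))$ and $\View_T(\myvec x)$ are computationally indistinguishable for every input $\myvec x$. The idea is to use this twice (once on each side of the desired inequality) to replace the real view by a quantity whose only dependence on the inputs outside $T$ is through $\hat f(\myvec x)$; then differential privacy of $\hat f$ closes the gap via post-processing.

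Concretely, fix $T$, a pair of $T$-neighboring inputs $\myvec x,\myvec x'$, a PPT distinguisher $D$, and a polynomial $p(\cdot)$. Let $\nu(k)$ be a negligible function dominating $D$'s advantage in distinguishing $\View_T(\cdot)$ from $S_T(\{x_i\}_{i\in T},\hat f(\cdot))$ (guaranteed by SFE security). Then chain the following three bounds:
\begin{align*}
\Pr[D(\View_T(\myvec x))=1]
 &\le \Pr[D(S_T(\{x_i\}_{i\in T},\hat f(\myvec x)))=1] + \nu(k)\\
 &\le e^\eps\cdot\Pr[D(S_T(\{x_i\}_{i\in T},\hat f(\myvec x')))=1] + \nu(k)\\
 &\le e^\eps\cdot\Pr[D(\View_T(\myvec x'))=1] + (1+e^\eps)\,\nu(k).
\end{align*}
The first and third inequalities are the two applications of simulator indistinguishability. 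The middle inequality views $y \mapsto \Pr_{S_T,D}[D(S_T(\{x_i\}_{i\in T},y))=1]$ as a (randomized) post-processing of $\hat f$'s output and invokes $\eps$-differential privacy of $\hat f$: the critical point is that the auxiliary input $\{x_i\}_{i\in T}$ fed to $S_T$ is identical under $\myvec x$ and $\myvec x'$ precisely because $\myvec x,\myvec x'$ are $T$-neighboring, so the post-processing really is a fixed function of $\hat f$'s output and the standard post-processing lemma applies. Choosing $k_0$ large enough that $(1+e^\eps)\nu(k)\le 1/p(nk)$ for all $k\ge k_0$ yields the bound required by \defref{comp-dist-dp}.

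The main obstacle is not mathematical depth but rather pinning down definitions, which is why the lemma is stated informally: one must specify what ``$t$-secure SFE protocol computing the randomized functionality $\hat f$'' means (the standard formulation, in which the simulator is given a fresh sample from $\hat f(\myvec x)$, is what makes the simulation step clean), and one must ensure the simulator's auxiliary input is exactly the coalition's inputs rather than something that could depend on parties outside $T$. Once these are fixed, the argument is essentially two simulator invocations sandwiching one post-processing step, with the negligible SFE error absorbed into the $1/p(nk)$ slack of computational differential privacy.
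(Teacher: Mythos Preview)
Your proposal is correct and follows essentially the same approach as the paper, which only offers an informal one-paragraph sketch preceding the lemma: replace the coalition's real view by a simulator applied to $(\{x_i\}_{i\in T},\hat f(\myvec x))$, invoke $\eps$-differential privacy of $\hat f$ on the output, and absorb the SFE indistinguishability error into the negligible slack of \defref{comp-dist-dp}. Your write-up is more careful than the paper's in making explicit the three-step chain and, in particular, the observation that $T$-neighboring inputs agree on $\{x_i\}_{i\in T}$ so that the simulator's auxiliary input is fixed and post-processing applies cleanly.
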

In the above lemma, the  if the $t$-secure protocol $\Pi$ computing $\hat{f}$ has perfect security, then $\Pi$ is information-theoretically $(t,\eps)$-differentially private.

\begin{remark}
We will only consider protocols computing a (randomized) function $\hat{f}(\cdot)$ resulting in all 
parties computing the {\em same} outcome of $\hat{f}({\myvec x})$. This can be achieved, e.g., by 
having one party compute $\hat{f}({\myvec x})$ and send the outcome to all other parties.
\end{remark}

\subsection{Distributed Protocols -- Basic Observations}

The following notation and basic observations are used throughout the paper.

\begin{notation}
\label{notate:trnascriptProbProvidedMsgs}
Fix an $n$-party randomized protocol $\Pi$ and fix some communication transcript $c$. Assume that party $p_i$ holds an input $x_i$ and receives messages according to the transcript $c$. We define $\alpha_i^{c}(x_i)$ to be the probability that on input $x_i$ party $p_i$ sends messages that are consistent with transcript $c$, given that it receives messages that are consistent with $c$.
The probability is taken over the randomness of party $p_i$.
\end{notation}

Let $\ell$ be the number of rounds in $\Pi$. Assume, without loss of generality, that $p_i$ receives and sends messages in every round, and let $\beta^c_j$ (where $1\leq j\leq \ell$) be the probability on input $x_i$ party $p_i$ sends in round $j$ messages that are consistent with transcript $c$, provided that in previous rounds $p_i$ sees messages that are consistent with $c$. Then, by the chain rule of conditional probabilities we have that 
$$\alpha_i^{c}(x_i) = \prod_{j=1}^{\ell}\beta^c_j.$$
%
%
Observe that the event that $p_i$ sends messages according to $c$ when it sees messages according to $c$ depends only on the randomness $r_i$, and hence this event is independent of whether the other parties  send messages according to $c$ when they see messages according to $c$. We hence get the following lemma:

\begin{lemma}
\label{lem:mult}
Fix an $n$-party randomized protocol $\Pi$, assume that each $p_i$ holds an
input $x_i$, and fix some communication transcript $c$.  Then, the
probability that $c$ is exchanged is $\prod_{i=1}^{n}\alpha_i^{c}(x_i)$.
\end{lemma}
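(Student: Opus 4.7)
The plan is to combine two ingredients: the per-party chain-rule identity $\alpha_i^c(x_i)=\prod_{j=1}^\ell \beta_j^c$ spelled out immediately before the statement, and the fact that the random tapes $r_1,\ldots,r_n$ of the parties are drawn mutually independently. The event ``transcript $c$ is exchanged'' can be written as an intersection, over all rounds $j$ and all parties $i$, of the event ``party $p_i$'s outgoing messages in round $j$ agree with $c$.'' I will expand this intersection round by round using conditional probability, and then party by party within each round using independence.

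First, apply the chain rule of conditional probability to the filtration by rounds:
\[
\Pr[c \text{ is exchanged}] \;=\; \prod_{j=1}^\ell \Pr\bigl[\text{round } j \text{ consistent with } c \,\bigm|\, \text{rounds } 1,\ldots,j-1 \text{ consistent with } c\bigr].
\]
Now condition on the event that rounds $1,\ldots,j-1$ are consistent with $c$. Under this conditioning, every party $p_i$ has received incoming messages through round $j-1$ that agree with $c$, so its round-$j$ outgoing messages are a deterministic function of $x_i$, $r_i$, and these (now fixed) received messages. In particular, the indicator that $p_i$'s round-$j$ outgoing messages agree with $c$ becomes a function of $r_i$ alone.

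Because $r_1,\ldots,r_n$ are mutually independent, these per-party indicators are mutually independent given the conditioning, so the probability that round $j$ is fully consistent factors over parties. Writing $\beta_j^c$ (implicitly depending on $i$ and $x_i$) for the per-party quantity in the excerpt, this gives
\[
\Pr\bigl[\text{round } j \text{ consistent} \bigm| \text{prior rounds consistent}\bigr] \;=\; \prod_{i=1}^n \beta_j^c.
\]
Substituting into the chain-rule identity and swapping the order of the double product yields
\[
\Pr[c \text{ is exchanged}] \;=\; \prod_{j=1}^\ell \prod_{i=1}^n \beta_j^c \;=\; \prod_{i=1}^n \prod_{j=1}^\ell \beta_j^c \;=\; \prod_{i=1}^n \alpha_i^c(x_i),
\]
where the last equality invokes the per-party chain-rule identity recalled above.

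The main point requiring care is the independence step within a single round: one must check that the conditioning on ``all prior rounds consistent with $c$'' really fixes every party's received messages through round $j-1$, so that each party's round-$j$ transmission becomes a function of $r_i$ alone. This is exactly why the round-by-round chain rule is applied \emph{before} the across-parties factorization; trying to factor across parties first would leave residual randomness in the incoming messages and block the independence argument.
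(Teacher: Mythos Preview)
Your argument is correct in outline, but there is one step you assert without justification. When you write ``Because $r_1,\ldots,r_n$ are mutually independent, these per-party indicators are mutually independent given the conditioning,'' you are using that the $r_i$ remain independent \emph{after} conditioning on the event ``rounds $1,\ldots,j-1$ consistent with $c$.'' Independence is not preserved under conditioning on an arbitrary joint event; it is preserved here only because the conditioning event is itself a product event $\bigcap_i\{r_i\in E_i^{(j-1)}\}$, where $E_i^{(j-1)}$ is the set of random strings that make $p_i$ consistent with $c$ through round $j-1$ when fed $c$'s incoming messages. Establishing that product structure requires the same round-by-round induction you are already running, so the gap is minor and easy to close---but it should be stated.

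Your final paragraph, however, draws the wrong moral. You claim that factoring across parties first ``would leave residual randomness in the incoming messages and block the independence argument.'' The paper's one-sentence proof does exactly what you say cannot be done: it observes that the event ``$p_i$ sends according to $c$ whenever it receives according to $c$'' is, unconditionally, a function of $r_i$ alone (the incoming messages are hypothetically fixed to $c$ in the definition of the event, not left random). These $n$ events are therefore independent outright, their intersection is exactly the event that $c$ is exchanged (by a trivial induction on rounds), and the product formula follows immediately. So the party-first factorization is not only possible but cleaner: it avoids the conditional-independence issue entirely. Your round-first decomposition works too, but it is the longer route, and it ultimately relies on the same product-event structure that makes the direct argument go through.
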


\subsection{The Local Model} 
\label{sec:local-model}

The local model (previously discussed in~\cite{DMNS06,KLNRS08}) is a
simplified distributed communication  model where the parties communicate
via a designated party -- a {\em curator%
\footnote{Unlike in a centralized setting where the curator is a trusted party that collects raw private information, in the local model the curator is a non-trusted party. In our setting, the curator is semi-honest.}} -- denoted $C$. The curator has no local input.  We will consider two types of differentially private local
protocols -- interactive and non-interactive.

In {\em non-interactive} local protocols each  party $p_i$
applies an $\eps$-differentially private algorithm $S_i$ on its private
input $x_i$ and randomness  $r_i$, and sends $S_i(x_i, r_i)$ to $C$ that
then performs an arbitrary computation and publishes its result.

In {\em interactive} local protocols the protocol proceeds in {\em rounds},
where in each round $j$ the curator sends to each party $p_i$ a ``query''
message $q_{i,j}$ and party $p_i$ responds with the $j$th ``answer'' 
$A_i(x_i, q_{i,1},\ldots,q_{i,j}, r_i)$; the answer is a function of the party's
input $x_i$, its random input $r_i$, and the first $j$ queries. I.e., each
round consists of two communication phases: first, the query messages are sent
by the curator, then, each party sends the appropriate response message.

We note that in the honest-but-curious setting we can assume, without loss of
generality, that the curator is deterministic, as randomness for the
curator may be provided by parties in their first message.

\begin{definition}[Differential privacy in the local model]
\label{def:local-model-privacy-collective}
We say that a protocol $\Pi$ in the local model is {\em $\eps$-differentially private} if the curator's view preserves $\eps$-differential privacy. Formally, for all neighboring ${\myvec x},{\myvec x'}$ and for every possible set $\V_C$ of views of the curator:
$$ \Pr[\View_{C}({\myvec x}) \in \V_C] \leq e^\eps\cdot \Pr[\View_{C}({\myvec x'}) \in \V_C], $$
where $\View_{C}({\myvec x})$ is the random variable containing the messages that $C$ receives  during the execution of the protocol with private inputs ${\myvec x} = (x_1,\ldots,x_n)$ and the probability is taken over the random inputs of the parties.
\end{definition}

We note that $\View_{C}({\myvec x})$ is defined in accordance with \defref{dist-dp} (with some abuse of notation, as we write $C$ instead of $\{C\}$). 
However, since  $C$ has no initial input and since $C$ is assumed to be deterministic, it is enough to include in $\View_{C}({\myvec x})$ only the messages that $C$ receives during the execution of the protocol with inputs ${\myvec x} = (x_1,\ldots,x_n)$.

Differential privacy in the local model may be equivalently phrased as a requirement to preserve  the privacy of each party independently of other parties. We next give a definition in this spirit by
considering the probabilities that a party $p_i$ replies in a certain way
to a given sequence of queries with, say, $x_i = 0$ and with, say, $x_i = 1$.
Any communication transcript $c$ in an execution of the protocol
defines a transcript $c_i$, where $$c_i = (q_{i,1},a_{i,1},\ldots,q_{i,\ell},a_{i,\ell})$$ is the restriction of $c$ to
the messages transferred between party $p_i$ and the curator (recall that
in the local model every party communicates solely with the curator). Thus,
we can use $\alpha_i^{c_i}(x_i)$ (see
\notateref{trnascriptProbProvidedMsgs}) to denote the probability that
$p_i$ with private input $x_i$ replies by $a_{i,1},\ldots,a_{i,\ell}$
provided the curator has sent queries $q_{i,1},\ldots,q_{i,\ell}$. Using this notation, we
present the alternative definition of privacy in the 
local model.
\begin{definition}[Differential privacy in the local model by individual privacy]
\label{def:local-model-privacy-individual}
We say that a protocol $\Pi$ in the local model is {\em
$\eps$-differentially private} if  the curator's view preserves
$\eps$-differential privacy with respect to each party separately.
Formally, for every $i \in [n]$ and for any possible communication
transcript $c_i = (q_{i,1},a_{i,1},\ldots,q_{i,\ell},a_{i,\ell})$ between
party $p_i$ and the curator (i.e., there exist inputs $x'_1,\dots,
x'_n$ and random inputs $r'_1,\dots, r'_n$ consistent with $c_i$), and for every
$x_i,y_i \in D$ it holds that  $$\alpha_i^{c_i}(x_i) \le e^\eps
\cdot \alpha_i^{c_i}(y_i),$$ where the probabilities are taken over the random
input of $p_i$.
\end{definition}

\begin{claim} 
\label{clm:local-model-def-equivalence}
\defref{local-model-privacy-collective} is equivalent to 
\defref{local-model-privacy-individual}.
\end{claim}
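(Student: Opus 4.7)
The plan is to exploit the fact that the curator is deterministic, so its view is completely determined by the answers it receives from the parties, and any full communication transcript $c$ decomposes into per-party subtranscripts $c_i = (q_{i,1}, a_{i,1}, \ldots, q_{i,\ell}, a_{i,\ell})$ attributable solely to the interaction between $C$ and $p_i$. Applying \lemref{mult} in the local-model instantiation, the probability of observing a specific full transcript $c$ on inputs ${\myvec x}$ factors as $\prod_{i=1}^n \alpha_i^{c_i}(x_i)$, where $\alpha_i^{c}(x_i) = \alpha_i^{c_i}(x_i)$ in the local model because $p_i$ sees only the messages in $c_i$.

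For the direction \defref{local-model-privacy-individual} $\Rightarrow$ \defref{local-model-privacy-collective}, fix neighboring inputs ${\myvec x}, {\myvec x'}$ differing only at coordinate $i^\star$ and a set $\V_C$ of curator views. Summing over the full transcripts in $\V_C$ and pulling out the factor that differs gives
\[
\Pr[\View_C({\myvec x}) \in \V_C] \;=\; \sum_{c \in \V_C} \alpha_{i^\star}^{c_{i^\star}}(x_{i^\star}) \prod_{j \ne i^\star} \alpha_j^{c_j}(x_j).
\]
Applying $\alpha_{i^\star}^{c_{i^\star}}(x_{i^\star}) \le e^\eps \alpha_{i^\star}^{c_{i^\star}}(x'_{i^\star})$ term by term and using $x_j = x'_j$ for $j \neq i^\star$ yields the collective bound.

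For the converse direction, fix $i \in [n]$, a transcript $c_i$ consistent with some inputs $x'_1, \ldots, x'_n$ and random inputs $r'_1, \ldots, r'_n$, and values $x_i, y_i \in D$. Form the neighboring tuples ${\myvec x} = (x'_1, \ldots, x'_{i-1}, x_i, x'_{i+1}, \ldots, x'_n)$ and ${\myvec y} = (x'_1, \ldots, x'_{i-1}, y_i, x'_{i+1}, \ldots, x'_n)$, and let $\V_C$ be the set of full transcripts whose $i$th projection equals $c_i$. By the factorization,
\[
\Pr[\View_C({\myvec x}) \in \V_C] = \alpha_i^{c_i}(x_i) \cdot Q \quad\text{and}\quad \Pr[\View_C({\myvec y}) \in \V_C] = \alpha_i^{c_i}(y_i) \cdot Q,
\]
where $Q$ is the sum, over all completions of $c_i$ to a full transcript, of $\prod_{j \ne i} \alpha_j^{c_j}(x'_j)$. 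The same $Q$ appears in both expressions because ${\myvec x}$ and ${\myvec y}$ agree on all coordinates other than $i$. The consistency assumption on $c_i$ guarantees that at least one completion has a strictly positive product, so $Q > 0$, and dividing the collective inequality by $Q$ yields the individual bound $\alpha_i^{c_i}(x_i) \le e^\eps \alpha_i^{c_i}(y_i)$.

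The one subtlety I expect is justifying that in an \emph{interactive} local protocol the projection $c_i$ really is self-contained for the purpose of $\alpha_i^{c_i}(\cdot)$: the curator's round-$j$ query to $p_i$ can depend on answers from other parties in earlier rounds, yet once we condition on a full transcript $c$, the queries appearing in $c_i$ are already fixed, and $p_i$'s conditional probability of producing the answers in $c_i$ depends only on those queries and on $x_i$. A brief verification of this fact, together with the observation that $Q > 0$, are the only non-mechanical ingredients; everything else is a rearrangement of the factored probability.
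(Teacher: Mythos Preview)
Your proof is correct and follows essentially the same approach as the paper's, using the factorization from \lemref{mult} to isolate the factor $\alpha_i^{c_i}(\cdot)$ and then cancel the common contribution of the other parties. The only cosmetic difference is in the \defref{local-model-privacy-collective} $\Rightarrow$ \defref{local-model-privacy-individual} direction: the paper picks a single full transcript $c$ extending $c_i$ (so effectively takes $\V_C=\{c\}$) and divides the resulting pointwise inequality by $\prod_{j\ne i}\alpha_j^{c_j}(x_j)$, whereas you sum over all extensions and divide by your $Q$; both arguments are valid and rest on the same positivity observation you flag.
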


\begin{proof}
We prove implications in both directions.

\paragraph{\defref{local-model-privacy-collective} $\Rightarrow$ \defref{local-model-privacy-individual}:}
Let $\Pi$ be according to \defref{local-model-privacy-collective}.
Given a possible transcript $c_i$ of messages between party $p_i$ and $C$, choose any possible transcript $c = (c_1,\ldots,c_n)$ that is consistent with $c_i$. 
We get that for all $x_i,y_i$,
$$\frac{\alpha_i^{c_i}(x_i)}{\alpha_i^{c_i}(y_i)} =  \frac{\alpha_i^{c_i}(x_i)}{\alpha_i^{c_i}(y_i)} \cdot \frac{\prod_{j\not= i} \alpha_j^{c_j}(x_j)} {\prod_{j\not= i} \alpha_j^{c_j}(x_j)} = \frac{\Pr[\View_C(x_1,\ldots,x_{i-1},x_i,x_{i+1},\dots,x_n) = c]} {\Pr[\View_C(x_1,\ldots,x_{i-1},y_i,x_{i+1},\dots,x_n) = c]} 
\leq  e^\eps, $$
where the last equality follows by \lemref{mult}, and the last inequality follows from $\Pi$ being  $\eps$-differentially private according to \defref{local-model-privacy-collective}, noting that $(x_1,\ldots,x_{i-1},x_i,x_{i+1},x_n)$ and $(x_1,\ldots,x_{i-1},y_i,x_{i+1},x_n)$ are neighboring.

\paragraph{\defref{local-model-privacy-individual} $\Rightarrow$ \defref{local-model-privacy-collective}:}
Let $\pi$ be according to \defref{local-model-privacy-individual}. Given a possible transcript $c = (c_1,\ldots,c_n)$ and neighboring inputs $(x_1,\ldots,x_{i-1},x_i,x_{i+1},x_n)$ and $(x_1,\ldots,x_{i-1},y_i,x_{i+1},x_n)$, we have that 
$$\frac{\Pr[\View_C(x_1,\ldots,x_{i-1},x_i,x_{i+1},\dots,x_n) = c]} {\Pr[\View_C(x_1,\ldots,x_{i-1},y_i,x_{i+1},\dots,x_n) = c]} = 
\frac{\alpha_i^{c}(x_i)}{\alpha_i^{c}(y_i)} \cdot \frac{\prod_{j\not= i} \alpha_j^{c}(x_j)} {\prod_{j\not= i} \alpha_j^{c}(x_j)} = \frac{\alpha_i^{c_i}(x_i)}{\alpha_i^{c_i}(y_i)} 
\leq  e^\eps,$$
where the first equality follows by \lemref{mult} and the the inequality follows from $\Pi$ being $\eps$-differentially private according to \defref{local-model-privacy-individual}.
\end {proof}

\clmref{local-model-def-equivalence} implies that, in the information-theoretic local model, requiring differential privacy for the curator implies differential privacy with respect to every coalition.

\subsection{Approximation}
We will construct protocols whose outcome approximates a function $f:D^n
\rightarrow \RR$ by a probabilistic function, according to the following definition:

\begin{definition}[Approximation]
A randomized function $\hat{f}:D^n \rightarrow \RR$ is an {\em additive} $(\gamma,\tau)$-approximation for a (deterministic) function $f$ if $$\Pr\left[|f(\myvec{x})-\hat{f}(\myvec{x})| > \tau(n)\right] \leq \gamma(n)$$ 
for all $\myvec{x} \in D^n$. The probability is over the randomness of $\hat f$.
\end{definition}

For example, by the properties of the Laplace distribution, \eqnref{gsmech} yields 
an additive $(e^{-k},k\cdot\GS_f/\eps)$-approximation to $f$, for every $k>0$.

\subsection{The Binary Sum and Gap Threshold Functions}
\label{sec:sumandgap}

The {\em binary sum} function is defined to be
$\SUM_n(x_1,\ldots,x_n) = \sum_{i=1}^n x_i$ for $x_i\in\{0,1\}$ (the subscript $n$ is omitted when it is clear from the context). We will use a {\em gap} (or promise) version of the threshold function:
\begin{definition}[Gap Threshold] \label{def:gap}
For $\kappa,\tau >0$, 
$$\GAP_{\kappa,\tau}(x_1,\dots,x_n)= \left\{\begin{array}{ll}
        0 & \mbox{If}~\SUM_n(x_1,\dots,x_n)\leq \kappa, \\
        1 & \mbox{If}~\SUM_n(x_1,\dots,x_n)\geq \kappa+\tau.
\end{array}\right.$$
Note that $\GAP_{\kappa,\tau}(x_1,\dots,x_n)$ is not defined when $\kappa < \SUM_n(x_1,\dots,x_n)< \kappa+\tau$. 
\end{definition}

It is easy to transform any $(\gamma,\tau/2)$-approximation $\hat f$ of $\SUM$ to a $(\gamma,0)$-approximation $\hat g$ to $\GAP_{\kappa,\tau}$: given $y = \hat f (\myvec x)$ for $\SUM_{n}(\myvec x)$, set the $\hat g (\myvec x)$ to be $0$ if $y \leq \kappa + \tau/2$ and $1$ otherwise. We get the following simple corollary:

\begin{corollary} \label{cor:reduction-gap-sum}
If there exists an $\ell$-round, $(t,\eps)$-differentially private protocol
(resp.\ $\eps$-differentially private protocol in the local model) that $(\gamma,\tau/2)$-approximates $\SUM_n$ sending $\rho$ messages, then for every $\kappa$ there exists an $\ell$-round, $(t,\eps)$-differentially private protocol (resp.\ $\eps$-differentially private protocol in the local model) that correctly computes $\GAP_{\kappa,\tau}$ with probability at least $1-\gamma$, sending at most $\rho$ messages.
\end{corollary}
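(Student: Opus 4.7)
The plan is to prove the corollary by an explicit reduction: the new protocol $\Pi'$ for $\GAP_{\kappa,\tau}$ is obtained from the given protocol $\Pi$ for $\SUM_n$ by appending a local post-processing step at the output party. Specifically, let $\Pi$ output the value $y=\hat f(\myvec x)$ at every party (per the Remark in \secref{dpp}); then $\Pi'$ simply has each party apply the deterministic function $g(y)=0$ if $y\le \kappa+\tau/2$ and $g(y)=1$ otherwise. The round complexity, message complexity, and communication pattern of $\Pi'$ are identical to those of $\Pi$, since the post-processing step involves no additional communication.

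For correctness, I would observe that if $\SUM_n(\myvec x)\le \kappa$, then with probability at least $1-\gamma$ we have $|y-\SUM_n(\myvec x)|\le \tau/2$, so $y\le \kappa+\tau/2$ and $\Pi'$ outputs $0$. Symmetrically, if $\SUM_n(\myvec x)\ge \kappa+\tau$, then with probability at least $1-\gamma$ we have $y\ge \kappa+\tau/2$, so $\Pi'$ outputs $1$. In either case $\Pi'$ agrees with $\GAP_{\kappa,\tau}(\myvec x)$ with probability at least $1-\gamma$.

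For privacy, I would use the standard post-processing property of differential privacy (which carries over to the distributed setting since post-processing is performed locally on each party's view and does not involve new messages). Concretely, for any coalition $T$ with $|T|\le t$, the view $\View^{\Pi'}_T(\myvec x)$ is obtained from $\View^{\Pi}_T(\myvec x)$ by applying a fixed deterministic function (each party locally computes $g(y)$ from the last message). Therefore, for any $T$-neighboring $\myvec x,\myvec{x'}$ and any set $\V_T$ of views,
\[
\Pr[\View^{\Pi'}_T(\myvec x)\in \V_T]=\Pr[\View^{\Pi}_T(\myvec x)\in \V'_T]\le e^{\eps}\Pr[\View^{\Pi}_T(\myvec{x'})\in \V'_T]=e^{\eps}\Pr[\View^{\Pi'}_T(\myvec{x'})\in \V_T],
\]
where $\V'_T$ is the preimage of $\V_T$ under the deterministic map. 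The identical argument with $T=\{C\}$ handles the local-model case using \defref{local-model-privacy-collective}.

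There is no real obstacle here; the only thing to be careful about is the correctness threshold choice ($\tau/2$) and noting that the deterministic post-processing is part of each party's local computation, so the communication complexity bounds are preserved verbatim. The corollary then follows.
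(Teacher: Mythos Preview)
Your proposal is correct and follows exactly the paper's approach: the paper states the transformation in the sentence preceding the corollary (threshold the output $y$ of the $\SUM$ protocol at $\kappa+\tau/2$) and leaves the routine verification of correctness, privacy, and complexity implicit, whereas you spell these out. The only additional content in your write-up is the explicit post-processing argument for privacy, which the paper treats as self-evident.
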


Specifically, non-existence of $(t,\eps)$-differentially private
protocols for computing $\GAP_{0,\tau}$ correctly with $n(t+1)/4$ messages
implies that there exists no $(t,\eps)$-differentially private
protocols for computing $\SUM_n$ with $n(t+1)/4$ messages and additive error magnitude $\tau/2$. 
The next claim asserts that the same non-existence also implies that, for any $0\leq \kappa\leq n-\tau$, 
there exists no $(t,\eps)$-differentially private protocol for computing $\GAP_{\kappa,\tau}$ 
correctly with $n(t+1)/8$ messages. Again, it applies to both the distributed and the local models.

\begin{claim} \label{clm:reduction-gapK-gap0}
If for some $0\leq \kappa\leq n-\tau$ there exists an $\ell$-round,
$(t,\eps)$-differentially private (respectively, $\eps$-differentially 
private in the local model) $n$-party
protocol that correctly computes $\GAP_{\kappa,\tau}$ with probability at
least $\gamma$ sending at most $\rho$ messages, then there exists an
$\ell$-round, $(t/2,\eps)$-differentially private (respectively, 
$\eps$-differentially private in the local model) $n/2$-party protocol
that correctly computes $\GAP_{0,\tau}$  with probability at least $\gamma$
sending at most $\rho$ messages.
\end{claim}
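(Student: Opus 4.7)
The plan is to reduce $\GAP_{0,\tau}$ on $n/2$ real parties to $\GAP_{\kappa,\tau}$ on $n$ parties by having each new party $p'_i$ (with input $y_i$) simulate two ``slots'' $2i-1$ and $2i$ in the given $n$-party protocol. Slot $2i-1$ will carry an input that depends on $y_i$, while slot $2i$ receives a publicly fixed padding bit $b_i\in\{0,1\}$. The $b_i$'s will be chosen so that $\sum_{j=1}^{n} x_j$ lands in the regions where $\GAP_{\kappa,\tau}$ is already determined by $\SUM_{n/2}(\myvec y)$. Because $p'_i$ simulates both of its slots locally, communication between slots $2i-1$ and $2i$ in the original protocol costs no message in the reduction, and a coalition of $t/2$ new parties controls exactly $t$ slots in the simulation.

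I split into two cases according to the value of $\kappa$. If $\kappa \leq n/2$, set slot $2i-1$ to $y_i$ and choose $b_1,\ldots,b_{n/2}\in\{0,1\}$ with $\sum_i b_i = \kappa$ (feasible since $\kappa \leq n/2$). Then $\sum_j x_j = \SUM_{n/2}(\myvec y) + \kappa$, so $\GAP_{\kappa,\tau}(\myvec x) = \GAP_{0,\tau}(\myvec y)$. If $\kappa > n/2$, I use a complementation trick: place $1 - y_i$ in slot $2i-1$ and pick $b_i$'s with $\sum_i b_i = \kappa + \tau - n/2$, which lies in $[0,n/2]$ by the hypothesis $n/2 < \kappa \leq n - \tau$. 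Then $\sum_j x_j = (n/2 - \SUM_{n/2}(\myvec y)) + (\kappa + \tau - n/2) = \kappa + \tau - \SUM_{n/2}(\myvec y)$, so $\GAP_{\kappa,\tau}(\myvec x) = 1 - \GAP_{0,\tau}(\myvec y)$, and the new protocol simply flips the output of the simulated $\GAP_{\kappa,\tau}$.

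For privacy, consider a coalition $S \subseteq [n/2]$ with $|S| \leq t/2$ and $S$-neighboring $\myvec y, \myvec y'$ differing at some coordinate $i \notin S$. The corresponding $\myvec x, \myvec x'$ differ only at slot $2i-1$, which lies outside $T := \{2j-1, 2j : j \in S\}$ with $|T| = 2|S| \leq t$. In the distributed model, $(t,\eps)$-differential privacy of the original protocol therefore yields $(t/2,\eps)$-differential privacy for the reduction. In the local model the curator's view in the reduction is identical to its view in a run of the original protocol on input $\myvec x$, so $\eps$-differential privacy carries over immediately. Correctness is inherited with probability at least $\gamma$ (the reduction is deterministic given the original protocol's output), and the number of rounds as well as the message count are preserved since the simulation proceeds round-by-round and intra-party messages are free.

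The one step that actually needs thought is the case $\kappa > n/2$: a naive padding construction cannot reach a total of $\kappa$ using only $n/2$ binary padding slots, so I must flip the real inputs to bring the required padding target into $[0,n/2]$, at the cost of negating the final output. Verifying that the hypothesis $\kappa \leq n - \tau$ is exactly what makes $\kappa + \tau - n/2 \leq n/2$ feasible is the only arithmetic check worth highlighting.
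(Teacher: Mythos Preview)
Your proof is correct and follows essentially the same approach as the paper: each new party simulates two old parties (so a coalition of $t/2$ new parties corresponds to $t$ old parties), pad with fixed bits to shift the threshold, and handle large $\kappa$ by complementing inputs and flipping the output. The only cosmetic differences are your interleaved pairing $(2i-1,2i)$ versus the paper's shifted pairing $(i,i+n/2)$, and that you combine the complementation and padding into a single construction for $\kappa>n/2$ whereas the paper first flips to obtain a protocol for $\GAP_{n-\kappa-\tau,\tau}$ and then invokes the $\kappa\le n/2$ case.
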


\begin {proof}
For $\kappa \leq n/2$, given an $n$-party  protocol $\Pi$ that correctly
computes  $\GAP_{\kappa,\tau}$, define an  $n/2$-party protocol $\Pi'$ 
for computing $\GAP_{0,\tau}$ by simulating parties
$p_{\frac{n}{2}+1}, \dots, p_n$ where  $x_{\frac{n}{2}+1}, \dots,
x_{\frac{n}{2}+\kappa}$ are set to $1$ and $x_{\frac{n}{2}+\kappa+1},
\dots, x_{n}$ are set to $0$. In the local model, a designated party,  say
$p_1$, can simulate these $n/2$ parties. In the distributed model,
we let each party $p_i$ simulate party $p_{i+n/2}$.

Observe that in the distributed model any view $v$ of a coalition $T'$ of size $t' \le t/2$ in some execution of the resulting protocol, is exactly the view of the coalition $T$ of size $2t' \le t$, implied by $T'$ (for $p_i \in T'$ we have $p_i,p_{i+n/2} \in T$), in the appropriate computation of the original protocol. Moreover, any $T'$-neighboring $\myvec{x},\myvec{x'}$ define $T$-neighboring $\myvec{x}\myvec{y},\myvec{x'}\myvec{y}$ 
(where $\myvec{y} = 1^{\kappa}0^{\frac{n}{2}-\kappa}$), such that $\Pr[\View_T({\myvec x}{\myvec y})  = v] = \Pr[\View_{T'}({\myvec x}) = v]$ and $\Pr[\View_T({\myvec x'}{\myvec y})  = v] = \Pr[\View_{T'}({\myvec x'}) = v]$. Thus, by the privacy of the original protocol, the resulting protocol is $(t/2,\eps)$-differentially private.

For $\kappa > n/2$, we can use the construction above to compute $\GAP_{n-\kappa-\tau,\tau}$, by flipping all input bits
(that is, changing 1 to 0 and vise-versa) before engaging in the execution, running the protocol, and finally flipping the result of the computation.
\end{proof}

\section{Motivating Examples}
\label{sec:motivatingExamples}

We begin with two examples manifesting benefits of choosing an analysis together 
with a differentially private protocol for computing it. 
In the first example, this paradigm yields more efficient protocols than the natural 
paradigm; in the second example, it yields simpler protocols.

\subsection{Binary Sum -- $\sqrt{n}$ Additive Error} \label{sec:MotivBinarySum}
We begin with a simple protocol for approximating $\SUM_n$ within $O(\sqrt{n}/\eps)$-additive approximation. This protocol is  well known as {\em Randomized Response}~\cite{W65}. 
We describe the protocol in the (non-interactive) local model, and it can be easily translated to a two round
(and $2n$ messages) $(n,\eps)$-differentially private distributed protocol by letting some arbitrarily designated party (say $p_1$) play the role of $C$.

Let $\Flip_{\alpha}(x)$ be a randomized bit flipping operator returning $x$
with probability $0.5+\alpha$ and $1-x$ otherwise, where $\alpha=\frac{\eps}{4+2\eps}$. The protocol proceeds as follows:  
\begin{enumerate}
\item Each party $p_i$ with private input $x_i \in \{0,1\}$ sends $z_i = \Flip_{\alpha}(x_i)$ to $C$.
\item
  \label{stp:C}
  $C$ locally computes and publishes $k=\sum_{i=1}^n z_i$.
\item
  \label{stp:each}
  Each party locally computes $\hat{f}= (k - (0.5-\alpha) n)/{2 \alpha}$.
\end{enumerate} 
A total of $O(n)$ messages and $O(n\log n)$ bits of communication are exchanged. To see that the protocol satisfies the privacy requirement of \defref{local-model-privacy-individual}, note that  
$$\frac{\Pr[\Flip_{\alpha}(1) = 1]}{\Pr[\Flip_{\alpha}(0)= 1]} = \frac{0.5+\alpha}{0.5-\alpha} = 1+\eps
\leq e^\eps,$$ and similarly $\Pr[\Flip_{\alpha}(0) =
0]/\Pr[\Flip_{\alpha}(1)= 0] \leq e^\eps$.
To see that the protocol approximates the sum function, note that 
$$\Exp[z_i]=\Exp[\Flip_{\alpha}(x_i)]=\left\{
 \begin{array}{ll}
 0.5+\alpha & \text{if\ } x_i =1 \\ 
 0.5-\alpha & \text{if\ } x_i =0.
 \end{array}
\right.$$
Thus,
$$\Exp[k] = (0.5+\alpha)\cdot\SUM({\myvec x}) + (0.5-\alpha)\cdot(n-\SUM({\myvec x}))
= 2\alpha\cdot\SUM({\myvec x}) + (0.5-\alpha)n,$$ 
and hence, 
$$\Exp[\hat{f}] = \Exp\left[\frac{k - (0.5-\alpha)n}{2\alpha}\right] =\SUM({\myvec x}).$$ 
By an application of the Chernoff bound, we get that $\hat{f}$ is an additive $(O(1),O(\sqrt{n}/\eps))$-approximation to $\SUM(\cdot)$, that is, with constant probability, the error
is $O(\sqrt{n}/\eps)$.

\begin{remark}
We next sketch an alternative $\eps$-differentially private protocol that $(O(1),\sqrt{n}/\eps)$-approximates $\SUM_n$:
\begin{enumerate}
\item Each party $p_i$ with private input $x_i \in \{0,1\}$ samples $y_i\sim
\Lap(1/\eps)$ and sends $z_i = x_i+y_i$ to $C$.
\item $C$ locally computes $\hat{f} =\sum_{i=1}^n z_i$ and publishes the
result. 
\end{enumerate}
The privacy of the protocol follows from the arguments in \secref{dp}.
\end{remark}

\begin{remark}
The above constructions result in {\em symmetric} approximations to $\SUM(\cdot)$ (i.e., the output distribution depends solely on $\SUM(\cdot)$ and not on the specific assignment).  While these differentially private protocols use $O(n)$ messages, it can be shown that for such symmetric functions that no efficient SFE protocols for such functions exist (see \secref{SymmetricFunctions} for more details).
\end{remark}

\subsection{Distance from a Long Subsequence of 0's}

Our second function measures how many bits in a sequence $\myvec {x}$ of
$n$ bits should be set to zero to get an
all-zero consecutive subsequence of length $n^{\alpha}$. In other words, the function should return the minimum weight over all 
substrings of $\myvec {x}$ of length
$n^{\alpha}$ bits: $$\DIST_\alpha({\myvec x}) = \min_i\left(\sum_{j=i}^{i+n^{\alpha}-1} x_{j}\right).$$ 
For $t \leq n/2$ we present a $(t,\eps,\delta)$-differentially private protocol%
\footnote{
$(\eps,\delta)$-differential privacy is a generalization,
defined in~\cite{DKMMN06}, of
$\eps$-differential privacy
where it is only required that
$\Pr[\hat{f}({\myvec x}) \in \V] \leq e^\eps\cdot \Pr[\hat{f}({\myvec x'}) \in \V]+\delta\;.$
}
approximating $\DIST_\alpha({\myvec x})$ with additive error $\tilde O(n^{\alpha/3}/\eps)$.

In our protocol, we treat the $n$-bit string $\myvec x$
(where $x_i$ is held by party $p_i$) as a sequence of $n^{1-\alpha/3}$
disjoint intervals, each $n^{\alpha/3}$ bit long. Let ${i_1}, \dots,
{i_{n^{1-\alpha/3}}}$ be the indices of the first bit in each interval,
and observe that $\min_{i_k} (\sum_{j={i_k}}^{i_k+n^{\alpha}-1} x_{j})$ is
an $n^{\alpha/3}$ additive approximation of $\DIST_\alpha$.
The protocol for computing an approximation $\hat{f}$ to $\DIST_\alpha$ is sketched below.

\begin{enumerate}
\item Every party $p_i$ generates a random variable $Y_i$ distributed according to the normal distribution $N(\mu = 0, \sigma^2 = 2R/n)$ where 
$R = \frac {2 \log {(\frac{2}{\delta})}}{\eps^2}$, and shares $x_i+Y_i$ between the parties
$p_1,\ldots,p_{t+1}$ using an additive $(t+1)$-out-of-$(t+1)$ secret sharing
scheme%
\footnote{Shared secrets are taken from a finite domain by rounding the numbers
$\log n$ digits after the point. 
This yields no breach in privacy and adds a small magnitude of error.}.

\item
Every party $p_i$, where $1 \leq i \leq t+1$,
sums, for every interval of length $n^{\alpha/3}$,
the shares it got from the parties in the interval and sends this sum to
$p_1$.
\item
For every interval of length $n^{\alpha/3}$, party
$p_1$ computes the sum of the $t+1$ sums it got for the interval.
By the additivity of the secret sharing scheme, this sum is equal to 
$$S_k = \sum_{j={i_k}}^{i_k+n^{\alpha/3}-1} (x_{j} + Y_j)
= \sum_{j={i_k}}^{i_k+n^{\alpha/3}-1} x_{j} + Z_k,$$
where $Z_k= \sum_{j={i_k}}^{i_k+n^{\alpha/3}-1} Y_{j}$ (notice that
$Z_k\sim N(\mu=0,\sigma^2=2R)$).
\item $p_1$ computes $\min_{k} \sum_{j=k}^{k+n^{2\alpha/3}} S_k$
and sends this output to all parties.
\end{enumerate}
Using the analysis of~\cite{DKMMN06}, this protocol is a
$(t,\eps,\delta)$-differentially private protocol when $2t
<n$. Furthermore, it can be shown that with high probability the additive
error is $\tilde O(n^{\alpha/3}/\eps)$.  To conclude, we showed a
simple $3$ round protocol for $\DIST_\alpha$.

This protocol demonstrates two advantages of the paradigm of choosing what
and how together. First, we choose an approximation of  $\DIST_\alpha$
(i.e., we compute the minimum of subsequences starting at a beginning of an
interval). This approximation reduces the communication in the
protocol. Second, we leak information beyond the output of the
protocol, as $p_1$ learns the sums $S_k$'s%
\footnote{One can use the techniques of~\cite{DFNT06} to avoid leaking
these sums while maintaining a constant number of rounds, however the
resulting protocol is less efficient.}.

\section{Lowerbounds on the Error of Binary Sum and Gap-Threshold in the Local Model}
\label{sec:local}

We prove that any $\ell$-round $\eps$-differentially private protocol in the local model for computing the binary sum function must exhibit an additive error of $\Omega(\sqrt{n}/\tilde{O}(\ell))$.
By \corref{reduction-gap-sum} and \clmref{reduction-gapK-gap0}, it suffices to prove that such a protocol can only compute $\GAP_{0,\tau}$  for $\tau = \Omega(\sqrt{n}/\tilde{O}(\ell))$ (i.e., the parameter $\kappa$ is set to zero). For that, we show that there are two input vectors -- one containing  $\Omega(\sqrt{n})$ ones, and the other is all zero -- for which the curator sees similar distributions on the messages, and hence must return similar answers. 

We will begin by having the non-zero vector be distributed according to a probability distribution $\A$ (on $n$-bit vectors). This implies that a specific choice for this vector exists. In the following we set
\begin{eqnarray}
\label{eq:alpha}
  \alpha & \eqdef & \frac{1}{\eps\sqrt{d n}},
\end{eqnarray}
where $d > 1$ (the value of $d$, which is a function of the number of rounds in the protocol $\ell$, is determined later).

\begin {notation} 
Define the distribution $\A$ on inputs from $\set{0,1}^n$ as follows: a vector $\myvec{x}=\vect{x_1,\dots,x_n}$ is chosen, where $x_i=1$ with probability $\alpha$ and $x_i=0$ with probability $(1-\alpha)$ (each input $x_i$ is chosen independently).

We use $\myvec X$ to identify the random variable representing the joint input and $X_i$ for the random variable corresponding to its $i$-th coordinate. The notation $\pr_\A[\cdot]$ is used when a probability over the choice of $\myvec{X}$ from $\A$ is considered.  For a set $D$ of possible curator's views we use the notation $\pr_\A[D]$ to denote the
probability of the event that the view of the curator falls in $D$ when the joint input $\myvec{X}$
is chosen according to $\A$.
\end {notation}

\paragraph{Main steps of the proof:} In \secref{NI}, we analyze properties of non-interactive differentially
private protocols in the local model, and show that a curator, trying to
distinguish between an input chosen according to distribution $\A$ and the
all zero input, fails with constant probability.  In \secref{I} we
generalize this analysis to interactive protocols in the local model.
In \secref{Local} we complete the proof of the lowerbound on the
gap-threshold function in the local model.

\subsection{Differentially Private Protocols in the Non-Interactive Local Model}
\label{sec:NI}

Consider protocols in the non-interactive local model where each party
holds an input $x_i\in \set{0,1}$ and independently applies an algorithm
$S_i$ (also called a sanitizer) before sending the sanitized result $c_i$
to the curator.  We want to prove that if each $S_i$ is
$2\eps$-differentially private for some $0< \eps \le 1$%
\footnote{We can relax the condition  $\eps \le 1$ by a condition $\eps \le \eps_0$ for any constant $\eps_0 \geq 1$. This would affect some of the constants in the calculations below.},
then the curator errs with constant probability  when trying to distinguish
between an input chosen according to distribution $\A$ and ${\myvec 0}$
(where ${\myvec 0}$ is the vector $0^n)$%
\footnote{We consider protocols that are $2\eps$-differentially private to simplify the notation in \secref{I}.}.

For every possible view $\myvec{c} = \vect{c_1,\dots, c_n}$ of the curator $C$, we consider the ratios of the probability of receiving messages according to $c$ when the input is chosen according to $\A$ and when it is ${\bf 0}$. The probability is over the randomness of the protocol, and over the choice according to distribution $\A$ where specified:
\begin{equation}
\label{eqn:r}
r(\myvec{c}) \eqdef \frac{\pr_\A\Big[\View_{C}({\myvec X})=\myvec{c}\ \Big]}%
                       {\pr\Big[\View_{C}({\myvec 0})=\myvec{c}\ \Big]}
  \quad \mbox{and} \quad
r_i(c_i) \eqdef \frac{\pr_\A\Big[S_i(X_i)={c_i}\Big]}{\pr\Big[S_i(0)=c_i\Big]}.
\end{equation}
Since in a non-interactive protocol $\pr[\View_{C}({\myvec 0})=\myvec{c}] = \prod_{i=1}^n \pr_\A[S_i(0)={c_i}]$ (the sanitizers $S_i$ use independent randomness) and 
$\pr_\A[\View_{C}({\myvec X})=\myvec{c}] = \prod_{i=1}^n \pr_\A[S_i(X_i)={c_i}]$ (the sanitizers $S_i$ use independent randomness and the entries of the random variable $\myvec X$ are chosen independently), we have that 
\begin{equation} \label{eq:rprod}
r(\myvec{c})=\prod_{i=1}^n r_i(c_i).
\end{equation}

We next show that if the inputs are selected according to $\A$, then with constant probability $r(\myvec{c})$ is bounded by a constant. In other words, for those views $c$ of the curator that are likely when inputs are selected according to $\A$, the probability of seeing $\myvec c$ when the protocol is executed with inputs selected according to $\A$ is similar to the probability of seeing $\myvec c$ when the protocol is executed with inputs set to zero.

Define a random variable $\myvec{C}=(C_1,\ldots,C_n)$ where $C_i = S_i(X_i)$ and $X_i$ is chosen 
according to the distribution $\A$. 
Defining the random variables $V_i\eqdef \ln r_i(C_i)$, we can write for every $\eta>0$:
\begin{equation}
\label{eqn:ln}
\pr_\A[r(\myvec{C}) > \eta]  = \Pr_\A\left[\prod_{i=1}^n r_i(C_i) > \eta\right] 
=\pr_\A\left[\sum_{i=1}^{n}{V_i} >\ln \eta\right],
\end{equation}
where the first equality is by \eqref{eq:rprod} above. 
 In the next two lemmas we show that each variable $V_i$ is bounded, and bound its expectation. Both proofs use the $2\eps$-differential privacy of the sanitizers. These bounds are then used with the Hoeffding bound in \lemref{Hoeffding_Ratio_AtoB} where we bound $\pr_\A[r(\myvec{C}) > \eta]$.

\begin{lemma}
\label{lem:bound}
 For every $i$ and for any $0 < \eps \le 1$, with probability one, $1-2\alpha\eps \leq r(c_i) \leq 1 + 
4\alpha\eps$ and $-4\alpha\eps \leq V_i \leq 4\alpha\eps$.
\end{lemma}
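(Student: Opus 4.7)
My plan is to prove both bounds by first rewriting $r_i(c_i)$ in a form that makes the role of $\alpha$ transparent, and then applying the $2\eps$-differential privacy of each local sanitizer $S_i$.

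First I would observe that, since under the distribution $\A$ each bit $X_i$ is independently $1$ with probability $\alpha$ and $0$ with probability $1-\alpha$, marginalizing gives
\[
\Pr_\A[S_i(X_i)=c_i] \;=\; \alpha\cdot \Pr[S_i(1)=c_i] + (1-\alpha)\cdot \Pr[S_i(0)=c_i].
\]
Dividing by $\Pr[S_i(0)=c_i]$ and writing $q \eqdef \Pr[S_i(1)=c_i]/\Pr[S_i(0)=c_i]$, this yields the clean identity
\[
r_i(c_i) \;=\; 1 + \alpha\,(q-1).
\]
At this point the entire lemma reduces to controlling $q$. Because the protocol is $2\eps$-differentially private in the local model, Definition~\ref{def:local-model-privacy-individual} (equivalently, the fact that $S_i$ is a randomized response function from $\{0,1\}$ to the message space) gives $e^{-2\eps}\leq q\leq e^{2\eps}$ with probability one.

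Next I would plug these two bounds into the identity for $r_i(c_i)$ and use elementary analytic inequalities. For the upper bound, $r_i(c_i)\leq 1+\alpha(e^{2\eps}-1)$, and the inequality $e^{x}-1\leq 2x$ valid for $x\in[0,1]$ (applied with $x=2\eps$, relying on $\eps\leq 1$ after a small adjustment of the constant, as noted in the footnote) yields $r_i(c_i)\leq 1+4\alpha\eps$. For the lower bound, $r_i(c_i)\geq 1-\alpha(1-e^{-2\eps})$, and the inequality $1-e^{-x}\leq x$ valid for all $x\geq 0$ gives $r_i(c_i)\geq 1-2\alpha\eps$. This establishes the first pair of inequalities.

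Finally, to bound $V_i=\ln r_i(C_i)$, I would feed the two-sided bound on $r_i(c_i)$ through the logarithm. The upper bound is immediate from $\ln(1+y)\leq y$: $V_i\leq \ln(1+4\alpha\eps)\leq 4\alpha\eps$. The lower bound uses $\ln(1-y)\geq -2y$ for $y\in[0,1/2]$, which applies here because $\alpha = 1/(\eps\sqrt{dn})$ makes $2\alpha\eps = 2/\sqrt{dn}$ arbitrarily small (so $2\alpha\eps\leq 1/2$ for all sufficiently large $n$), yielding $V_i\geq \ln(1-2\alpha\eps)\geq -4\alpha\eps$. There is no real obstacle here; the only subtlety is keeping track of which elementary inequality is sharp enough for each constant, which is precisely the issue the paper's footnote flags when relaxing the range of $\eps$.
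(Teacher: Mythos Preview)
Your proposal is correct and follows essentially the same route as the paper: rewrite $r_i(c_i)=1+\alpha(q-1)$ with $q=\Pr[S_i(1)=c_i]/\Pr[S_i(0)=c_i]$, use $2\eps$-differential privacy to bound $q\in[e^{-2\eps},e^{2\eps}]$, then apply the elementary inequalities $e^{2\eps}-1\leq 4\eps$, $1-e^{-2\eps}\leq 2\eps$, $\ln(1+y)\leq y$, and $\ln(1-y)\geq -2y$. You are slightly more careful than the paper in flagging that the upper exponential inequality needs $\eps$ bounded by a suitable constant (indeed $e^{2\eps}-1\leq 4\eps$ already fails at $\eps=1$), which is exactly the slack the paper's footnote is meant to absorb.
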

\begin{proof}
For every $i$ and every value $c_i$,
$$ r_i(c_i) = \frac{\pr_{\A}[S_i(X_i)=c_i]}{\pr[S_i(0)=c_i]} = \frac{\alpha\pr[S_i(1)=c_i]+(1-\alpha)\pr[S_i(0)=c_i]}{\pr[S_i(0)=c_i]} = 1+\alpha\left(\frac{\pr[S_i(1)=c_i]}{\pr[S_i(0)=c_i]}-1\right).$$
Using $e^{-2\eps}\leq\frac{\pr[S_i(1)=c_i]}{\pr[S_i(0)=c_i]}\leq e^{2\eps}$, we get that 
$$ 1+\alpha(e^{-2\eps}-1) \leq r_i(c_i) \leq 1+\alpha(e^{2\eps}-1).$$

Using $e^{2x} < 1+4x$ and $1-e^{-2x} < 2x$ for $0 < x \leq 1$, we get 
$1-2\alpha\eps \leq r_i(C_i) \leq 1+4\alpha\eps$. Recall that $V_i = \ln r_i(C_i)$. Using $\ln(1+x) \leq x$ and $\ln(1-x) \geq -2x$ for $0 \leq x \leq 0.5$ and noting that $\alpha = 1/(\eps\sqrt{dn})$ and hence $4\alpha\eps \ll 0.5$, we get that $-4\alpha\eps \leq V_i \leq 4\alpha\eps$.
\end{proof}

\begin{lemma}
\label{lem:expected}
For every $i$ and for any $0 < \eps \le 1$, $$\expected[V_i] \leq 32\alpha^2\eps^2.$$
\end{lemma}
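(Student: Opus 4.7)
The plan is to exploit a first-order cancellation that drops $\expected[V_i]$ from the naive $O(\alpha\eps)$ estimate down to $O(\alpha^2\eps^2)$. First, I set $q(c) \eqdef \pr[S_i(0) = c]$ and $\rho(c) \eqdef \pr[S_i(1) = c]/q(c) - 1$, so that by the calculation already done in \lemref{bound} we have $r_i(c) = 1 + \alpha\rho(c)$, and also $p(c) \eqdef \pr_\A[S_i(X_i) = c] = q(c)\bigl(1 + \alpha\rho(c)\bigr)$. Then
$$\expected[V_i] \;=\; \sum_{c} p(c)\,\ln\!\bigl(1 + \alpha\rho(c)\bigr).$$

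Next, I apply the elementary inequality $\ln(1+y) \leq y$, valid because $r_i(c) = 1+\alpha\rho(c) > 0$ by \lemref{bound}, to obtain $\expected[V_i] \leq \alpha \sum_c p(c)\rho(c)$. It may seem this only yields an $O(\alpha\eps)$ bound, but the crucial point is that $p(c)-q(c)$ itself carries a factor of $\alpha$. Substituting $p(c) = q(c) + \alpha q(c)\rho(c)$,
$$\sum_c p(c)\rho(c) \;=\; \sum_c q(c)\rho(c) \;+\; \alpha \sum_c q(c)\rho(c)^2,$$
and the first term vanishes because $\sum_c q(c)\rho(c) = \sum_c \pr[S_i(1)=c] - \sum_c \pr[S_i(0)=c] = 1-1 = 0$. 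Hence $\expected[V_i] \leq \alpha^2 \sum_c q(c)\rho(c)^2$, which is $\alpha^2$ times the $\chi^2$-divergence between the output distributions of $S_i(1)$ and $S_i(0)$.

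Finally, I bound the remaining sum via the $2\eps$-differential privacy of $S_i$: since $e^{-2\eps} \leq 1+\rho(c) \leq e^{2\eps}$, we have $|\rho(c)| \leq e^{2\eps}-1$. Using inequalities in the spirit of those invoked in \lemref{bound}'s proof (or, cleanly, Popoviciu's bound on the variance of $1+\rho(c)$ under $q$, since $\expected_q[1+\rho]=1$ forces $\chi^2 = \Var_q(1+\rho) \leq \tfrac{1}{4}(e^{2\eps}-e^{-2\eps})^2$) one obtains $\sum_c q(c)\rho(c)^2 \leq 32\eps^2$ for $0<\eps\leq 1$, yielding $\expected[V_i] \leq 32\alpha^2\eps^2$ as claimed.

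The main obstacle is recognizing the cancellation $\sum_c q(c)\rho(c) = 0$, which is what produces the second factor of $\alpha$. Without it, $\ln(1+y)\leq y$ would only give $\expected[V_i] = O(\alpha\eps)$, which would then fail to combine with \lemref{bound} via Hoeffding's inequality to give anything nontrivial for the subsequent \lemref{Hoeffding_Ratio_AtoB}. The bound on $|\rho(c)|$ and the exact constant $32$ are routine afterthoughts once the cancellation is in place.
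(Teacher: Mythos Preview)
Your proof is correct and shares the paper's central move: bound $\ln(1+y)$ by $y$, then exploit a first-order cancellation (both $\pr_\A[S_i(X_i)=\cdot]$ and $\pr[S_i(0)=\cdot]$ sum to $1$) so that only a quadratic term of order $\alpha^2\eps^2$ survives.

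The execution differs in a way worth noting. The paper groups outcomes by level sets $B_b=\{c:r_i(c)=1+b\}$, rewrites $b=(1+2b^2)-(1-b+2b^2)$, and uses $\pr[S_i(0)\in B_b]\leq(1-b+2b^2)\pr_\A[S_i(X_i)\in B_b]$ to convert the subtracted piece into $\sum_b\pr[S_i(0)\in B_b]=1$; the whole thing then collapses to $8\beta^2=32\alpha^2\eps^2$, relying on the range bound $|b|\leq 4\alpha\eps$ from \lemref{bound}. Your parametrization $\rho(c)=\pr[S_i(1)=c]/\pr[S_i(0)=c]-1$, so that $r_i(c)=1+\alpha\rho(c)$, makes the cancellation $\sum_c q(c)\rho(c)=0$ explicit and identifies the remainder as $\alpha^2\cdot\chi^2\!\bigl(S_i(1)\,\|\,S_i(0)\bigr)$, which you bound via Popoviciu's inequality on the range $[e^{-2\eps},e^{2\eps}]$. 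Your route is more transparent---one sees immediately why the answer is a squared quantity---and does not need \lemref{bound} as an input; the paper's version is more self-contained within the $b$-parametrization it has already set up. Both arrive at the same constant~$32$.
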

\begin{proof}
For the proof, we assume that the output of $S_i$ is in a countable set.
Let $$B_b\eqdef\set{c_i:r_i(c_i)=1+b}\quad\quad \mbox{for}~-2\alpha\eps \leq b \leq 4\alpha\eps.$$
\lemref{bound} implies that these are the only values possible for $b$.
By the definition of $r_i$, for every $c_i\in B_b$,
$$\frac{\pr_\A[S_i(X_i)=c_i]}{\pr[S_i(0)=c_i]}=r(c_i)=1+b,$$
and hence,
\begin{eqnarray}
\pr[S_i(0)\in B_b]=\frac{\pr_\A[S_i(X_i)\in B_b]}{1+b}\leq (1-b+2b^2)\cdot\pr_\A[S_i(X_i)\in B_b]. \label{eq:split_b}
\end{eqnarray}
Let $\beta=2\alpha\eps$.
We next bound  $\expected[V_i]$:
\begin{eqnarray}
\expected[V_i] & = & \expected_\A[\ln r(C_i)]
 =  \sum_{-\beta \leq b \leq 2\beta}\pr_\A[S_i(X_i)\in B_b]\cdot\ln(1+b) \nonumber \\ 
& \leq & \sum_{-\beta \leq b \leq 2\beta}\pr_\A[S_i(X_i)\in B_b]\cdot b \label{eq:explain_ln} \\
& = &  \sum_{-\beta \leq b \leq 2\beta}\pr_\A[S_i(X_i)\in B_b] \cdot (1 + 2b^2)
      -  \sum_{-\beta \leq b \leq 2\beta}\pr_\A[S_i(X_i)\in B_b] \cdot (1-b+2b^2) \label{eq:replaced}
\end{eqnarray}
Where \eqref{eq:explain_ln} follows by $\ln(1+b)\leq b$.
Using \eqref{eq:split_b} we can replace the second term in \eqref{eq:replaced} by $\sum_{-\beta \leq b \leq 2\beta}\pr[S_i(0)\in B_b]$ and get
\begin{eqnarray*}
\expected[V_i] & \leq & (1+ 2(2\beta)^2) \sum_{-\beta \leq b \leq 2\beta}\pr_\A[S_i(X_i)\in B_b]  
- \sum_{-\beta \leq b \leq 2\beta}\pr[S_i(0)\in B_b]\\
& = & (1+ 8\beta^2) \cdot \pr_\A[S_i(X_i)\in \cup_b B_b] - \pr[S_i(0)\in \cup_b B_b] \\
&\leq & (1 + 8\beta^2)\cdot 1 - 1 = 8\beta^2=32\alpha^2\eps^2. 
\end{eqnarray*}
\end{proof}

\medskip

By \lemref{expected}, $\Exp [\sum_{i=1}^n V_i] = \sum_{i=1}^n{\Exp [V_i]} \leq 32\alpha^2
\eps^2 n=32/d$.  We next prove \lemref{Hoeffding_Ratio_AtoB}
which shows that  $\sum_{i=1}^n V_i$ is concentrated around this value. We use the Hoeffding bound:
\begin{theorem} [Hoeffding bound]
Let $V_1, \dots, V_n $ be independent random variables such that
$V_i \in [a, b]$ and $\sum_{i=1}^n \expected[V_i]=\mu$.
Then, for every $t > 0$,
$$\Pr\left[ \sum_{i=1}^n V_i - \mu \geq t\right] \leq  \expon{-\frac{2\,t^2}{n(b - a)^2} }.$$
\end{theorem}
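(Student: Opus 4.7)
The plan is to prove the Hoeffding bound via the standard Chernoff--Markov exponential moment method. The first step is to apply Markov's inequality to the exponentiated centered sum: for any parameter $s > 0$,
\[
\Pr\left[\sum_{i=1}^n V_i - \mu \geq t\right] = \Pr\left[e^{s(\sum_i V_i - \mu)} \geq e^{st}\right] \leq e^{-st} \cdot \Exp\left[e^{s\sum_i (V_i - \Exp V_i)}\right].
\]
By independence of the $V_i$'s, the expectation factorizes as $\prod_{i=1}^n \Exp[e^{s(V_i - \Exp V_i)}]$, reducing the problem to bounding each single-variable moment generating function.

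The core technical step is Hoeffding's lemma: for any random variable $W \in [a,b]$ with $\Exp[W] = 0$, one has $\Exp[e^{sW}] \leq \expon{s^2(b-a)^2/8}$. I would prove this by writing $W$ as the convex combination $W = \frac{b-W}{b-a}\cdot a + \frac{W-a}{b-a}\cdot b$, applying convexity of $x \mapsto e^{sx}$ to obtain $e^{sW} \leq \frac{b-W}{b-a} e^{sa} + \frac{W-a}{b-a} e^{sb}$, and taking expectations (using $\Exp[W] = 0$) to get a pure function $\varphi(s)$ of $s$. I would then verify, via a Taylor expansion of $\log \varphi(s)$ around $s = 0$, that $\log \varphi(s) \leq s^2(b-a)^2/8$ by showing the second derivative is bounded uniformly by $(b-a)^2/4$ (a short calculus computation using the fact that the derivative corresponds to the mean of a shifted Bernoulli-like distribution on $\{a,b\}$, whose variance is maximized at $(b-a)^2/4$).

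Combining the factorization with the lemma applied to each $W_i = V_i - \Exp V_i \in [a - \Exp V_i, b - \Exp V_i]$ (still a range of width $b-a$) yields
\[
\Pr\left[\sum_{i=1}^n V_i - \mu \geq t\right] \leq \expon{-st + s^2 n (b-a)^2/8}.
\]
Optimizing the exponent by choosing $s = 4t/(n(b-a)^2)$ collapses the right-hand side to $\expon{-2t^2/(n(b-a)^2)}$, as claimed. The main obstacle is the convexity-plus-Taylor argument inside Hoeffding's lemma; although the structure is textbook, care is needed with the second-derivative estimate that pins down the constant $1/8$ (and hence the final constant $2$ in the exponent). Since this bound is a classical tool the paper merely cites for later use in \lemref{Hoeffding_Ratio_AtoB}, the proof sketch above is included purely for completeness.
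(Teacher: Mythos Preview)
Your sketch is the standard Chernoff--Markov argument for Hoeffding's inequality and is correct, but there is nothing to compare it to: the paper states this theorem without proof, citing it as the classical Hoeffding bound and invoking it immediately in the proof of \lemref{Hoeffding_Ratio_AtoB}. Since the paper treats it as an off-the-shelf tool, your inclusion of a proof is purely supplementary and goes beyond what the paper itself provides.
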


\begin{lemma}
\label{lem:Hoeffding_Ratio_AtoB}
$\pr_\A[r(\myvec{C}) > \expon{\nu/ d}] <  \expon{-(\nu-32)^2/32d}$ for every $\nu > 32$.
\end{lemma}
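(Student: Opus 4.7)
The plan is to combine the two preceding lemmas with the Hoeffding bound applied to the random variables $V_1,\ldots,V_n$. By \eqnref{ln}, proving the claim reduces to bounding
\[
\pr_\A\!\left[\sum_{i=1}^n V_i > \nu/d\right].
\]
Since the sanitizers $S_i$ use independent randomness and the coordinates $X_i$ are independent under $\A$, the variables $V_i = \ln r_i(C_i)$ are mutually independent, which is the key structural fact that makes Hoeffding applicable.

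First I would record the two quantitative inputs. From \lemref{bound} every $V_i$ lies in the interval $[-4\alpha\eps,\, 4\alpha\eps]$, so $b-a = 8\alpha\eps$. From \lemref{expected} the mean satisfies $\mu \eqdef \sum_{i=1}^n \expected[V_i] \leq 32 n \alpha^2\eps^2$. Plugging in $\alpha = 1/(\eps\sqrt{dn})$ from \eqref{eq:alpha} gives $n\alpha^2\eps^2 = 1/d$, so $\mu \leq 32/d$.

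Next I would set the Hoeffding deviation. Since $\nu > 32$, we have $\nu/d > 32/d \geq \mu$, so
\[
\pr_\A\!\left[\sum_{i=1}^n V_i > \nu/d\right] \;\leq\; \pr_\A\!\left[\sum_{i=1}^n V_i - \mu > (\nu-32)/d\right],
\]
and applying Hoeffding with $t = (\nu-32)/d$ yields
\[
\pr_\A\!\left[\sum_{i=1}^n V_i - \mu > (\nu-32)/d\right] \;\leq\; \exp\!\left(-\frac{2\,(\nu-32)^2/d^2}{n(8\alpha\eps)^2}\right) = \exp\!\left(-\frac{(\nu-32)^2/d^2}{32\, n\alpha^2\eps^2}\right).
\]
Using once more $n\alpha^2\eps^2 = 1/d$, the denominator in the exponent simplifies to $32/d$, giving the claimed bound $\exp(-(\nu-32)^2/(32d))$.

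There is no real obstacle here: the calculation is essentially mechanical once \lemref{bound} and \lemref{expected} are in hand, and the constants are arranged precisely so that the substitution $n\alpha^2\eps^2 = 1/d$ cleans up both the mean shift and the Hoeffding denominator. The only step worth double-checking is the direction of the inequality when we replace $\mu$ by its upper bound $32/d$ inside the probability, which is valid precisely because $\nu > 32$ guarantees the resulting deviation $(\nu-32)/d$ is positive so that Hoeffding can be invoked.
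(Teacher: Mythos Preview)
Your proposal is correct and follows essentially the same route as the paper's proof: reduce via \eqnref{ln} to a deviation bound on $\sum_i V_i$, plug in the range from \lemref{bound} and the mean bound from \lemref{expected}, apply Hoeffding, and simplify using $n\alpha^2\eps^2 = 1/d$. Your explicit remarks about the independence of the $V_i$ and the sign check needed to replace $\mu$ by $32/d$ are the only differences, and they just make explicit what the paper leaves implicit.
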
        
\begin{proof} 
By \eqnref{ln},  \lemref{bound}, \lemref{expected}, and substituting $\alpha = \frac{1}{\eps\sqrt{dn}}$:
\begin{eqnarray*}
\pr_\A[r(\myvec{C}) > \expon{\nu/ d}] & = &
\pr_\A\left[\sum_{i=1}^{n}{V_i} >\frac{\nu} {d}\right] \\
& = &
\pr_\A\left[\sum_{i=1}^{n}{V_i} -\sum_{i=1}^n\expected V_i > \frac{\nu} {d}-\sum_{i=1}^n\expected V_i\right]\\
& \leq &
\pr_\A\left[\sum_{i=1}^{n}{V_i} -\sum_{i=1}^n\expected V_i
  > \frac{\nu} {d}-n\cdot 32\alpha^2\eps^2\right] \\
& \leq & \expon{-\frac{2\,( \frac{\nu} {d}-n\cdot 32\alpha^2\eps^2)^2}%
          {64\,n\,\alpha^2\,\eps^2}}\\
&  = & \expon{-(\nu-32)^2 / 32 d}.
\end{eqnarray*}
\end{proof}

We now rephrase \lemref{Hoeffding_Ratio_AtoB} in a way that would be more convenient for our argument in the next section. 
Let $\Pi$ be a $2\eps$-private, non-interactive, local protocol, where $0<\eps \le 1$. 
For a possible curator's view $\myvec{c}$, let 
$$p_\A(\myvec{c}) = \pr_\A[\View_{C}(\myvec{X})=\myvec{c}] \quad \quad \mbox{and} \quad \quad  p_\myvec{0}(\myvec{c}) = \pr[\View_{C}(\myvec {0})=\myvec{c}],$$
where in $p_\A(\myvec{c})$ the probability is taken over the choice of $\myvec{X}$
according to the distribution $\A$ and the randomness of $\Pi$, and
in $p_\myvec{0}(\myvec{c})$ the probability is taken over the randomness of $\Pi$.
The following corollary follows from \lemref{Hoeffding_Ratio_AtoB} and the definition of $r$ in \eqnref{r}.

\begin{corollary}
\label{cor:non-interactive}
Assume we execute $\Pi$ with input sampled according to distribution $\A$, then for every $\nu > 32$, with probability at least $1-\expon{-(\nu-32)^2/32 d}$, the curator's view satisfies:
$$p_\A(\myvec{c}) \leq \expon{\nu/ d} \cdot p_\myvec{0}(\myvec{c}),$$
where the probability is taken over the random choice from $\A$ and the randomness of \/ $\Pi$.
\end{corollary}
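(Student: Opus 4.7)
The plan is to observe that \corref{non-interactive} is essentially a notational rephrasing of \lemref{Hoeffding_Ratio_AtoB}, obtained by unpacking the definition of $r(\myvec{c})$ and identifying the probability spaces.

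First I would recall from \eqnref{r} that for any curator's view $\myvec{c}$,
\[
r(\myvec{c}) = \frac{\pr_\A[\View_C(\myvec{X}) = \myvec{c}]}{\pr[\View_C(\myvec{0}) = \myvec{c}]} = \frac{p_\A(\myvec{c})}{p_{\myvec{0}}(\myvec{c})}.
\]
Therefore the event $\{r(\myvec{c}) \le e^{\nu/d}\}$ is literally the same event as $\{p_\A(\myvec{c}) \le e^{\nu/d} \cdot p_{\myvec{0}}(\myvec{c})\}$, and the complementary event $\{r(\myvec{c}) > e^{\nu/d}\}$ corresponds to a failure of the inequality we want.

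Next I would match the probability spaces. The random variable $\myvec{C} = (C_1,\ldots,C_n)$ in \lemref{Hoeffding_Ratio_AtoB} is defined by $C_i = S_i(X_i)$ where $X_i \sim \A$ independently, and the probability is taken over both the choice of $\myvec{X}$ and the internal randomness of the sanitizers. In the non-interactive local model the curator's view is exactly $\View_C(\myvec{X}) = (S_1(X_1),\ldots,S_n(X_n))$, so the distribution of $\myvec{C}$ coincides with the distribution of $\View_C(\myvec{X})$ when $\myvec{X} \sim \A$ and $\Pi$ is executed. Consequently, $\Pr_\A[r(\myvec{C}) > e^{\nu/d}]$ in the lemma is exactly the probability (over $\myvec{X} \sim \A$ and the randomness of $\Pi$) that the curator's view $\myvec{c}$ violates $p_\A(\myvec{c}) \le e^{\nu/d} \cdot p_{\myvec{0}}(\myvec{c})$.

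Finally I would invoke \lemref{Hoeffding_Ratio_AtoB} to conclude that this failure probability is strictly less than $\exp(-(\nu-32)^2/32d)$ for every $\nu > 32$, so the complementary event — namely $p_\A(\myvec{c}) \le e^{\nu/d} \cdot p_{\myvec{0}}(\myvec{c})$ — holds with probability at least $1 - \exp(-(\nu-32)^2/32d)$, as claimed. There is no real obstacle here since everything reduces to unpacking definitions; the only point warranting care is checking that the ``randomness of $\Pi$'' in the corollary is exactly the internal randomness of the sanitizers $S_1,\ldots,S_n$ used in defining $\myvec{C}$, which is immediate in the non-interactive local model.
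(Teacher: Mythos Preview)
Your proposal is correct and matches the paper's approach exactly: the paper states only that the corollary ``follows from \lemref{Hoeffding_Ratio_AtoB} and the definition of $r$ in \eqnref{r},'' and your write-up is precisely the unpacking of that remark. There is nothing to add.
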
        

\subsection{Differentially Private Protocols in the Interactive Local Model}
\label{sec:I}

In this section we generalize \corref{non-interactive} to interactive local
protocols where each party holds an input $x_i\in \set{0,1}$. The structure of our argument is as follows:
\begin{enumerate}
\item We decompose an $\ell$-round $\eps$-differentially private protocol $\Pi$ into $\ell$ non-interactive, local protocols, and prove that each of the $\ell$ protocols is $2\eps$-differentially private. Thus, we can apply \corref{non-interactive} to each protocol.
\item We view the original protocol as a protocol between the curator and a single party, simulating the other $n$ parties. 
In this protocol the curator's goal is to determine whether inputs are all zero or they are sampled according to $\A$. 
We apply a composition lemma to show that the curator's success probability does not increase by too much as $\ell$ grows. Clearly, this is true also for the original protocol.
\end{enumerate}

\subsubsection{A Composition Lemma}

Consider an interactive protocol, where a (deterministic) curator $C$ sends adaptive queries to a single (randomized) party $p$ 
holding a private input $x\in\{0,1\}$ in a similar setup to that of the local model (except that we make no requirement 
for $\eps$-differential privacy). We assume that the party $p$ is stateless and  that in each round $1\leq j\leq \ell$, the protocol proceeds as follows:
\begin{enumerate}
\item In the first phase of round $j$, the curator $C$ sends $p$ a message $q_j$
  (this message is also called the query); this message is a function of the round number $j$ and the messages the curator got  from $p$ in the previous rounds.
\item In the second phase of round $j$, party $p$ chooses fresh random coins and based on these coins and the query $q_j$ it computes a message $\V_j$ and sends it to the curator. We consider the randomized function computing the message $\V_j$ as an algorithm $A_j$, that is, $\V_j=A_j(x)$. 
\end{enumerate}

\begin{definition}
We say that a possible outcome $\V_j$ is $\eps$-good for algorithm $A_j$ if \/ $\Pr[A_j(1)=\V] \leq e^\eps \Pr[A(0)=\V]$, 
where the probabilities are taken over the randomness of algorithm $A_j$. 
An algorithm $A_j$ is $(\eps,\delta)$-good if \/ $\Pr[\mbox{$A_j(1)$ is $\eps$-good for $A_j$}] \geq 1-\delta$, 
where the probability is taken over the randomness of $A_j$.
\end{definition}

Let $\Pi$ be a protocol, as defined above, in which for every $j$ and every transcript of messages $\V_1,\ldots,\V_{j-1}$,
sent by $p$ in rounds $1,\ldots,j-1$, the curator $C$ replies with a query $q_j$, such that the algorithm $A_j$ resulting from $q_j$ is  an $(\eps,\delta)$-good algorithm.
Define a randomized algorithm $\hat{A}$ that simulates the interaction between $p$ and $C$, i.e., given input $x\in\{0,1\}$ 
it outputs a transcript $(q_1,\V_1,q_2,\V_2,\ldots,q_\ell,\V_\ell)$ sampled according to $\Pi(x)$.

\begin{lemma}
\label{lem:compo}
$\hat{A}$ is $(\ell\eps,1-(1-\delta)^\ell)$-good.
\end{lemma}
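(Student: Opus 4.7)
The plan is to decompose the probability of a full transcript into a product of per-round probabilities, apply the $(\eps,\delta)$-goodness hypothesis in each round, and then combine via a union/chain-rule argument.

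First, I would fix any transcript $\tau=(q_1,V_1,\ldots,q_\ell,V_\ell)$ produced by $\hat A$. Since the curator $C$ is deterministic, each query $q_j$ is determined by the prefix $V_1,\ldots,V_{j-1}$, and in each round the party uses \emph{fresh} random coins. Consequently, for any input $x\in\{0,1\}$,
$$
\Pr[\hat A(x)=\tau]\;=\;\prod_{j=1}^{\ell}\Pr[A_j(x)=V_j],
$$
where $A_j$ denotes the (prefix-determined) round-$j$ algorithm obtained when $C$ issues the query $q_j$ implied by the prefix $V_1,\ldots,V_{j-1}$. This factorization is the analogue of \lemref{mult} in the local interactive setting and is the structural fact that powers the rest of the argument.

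Next, I would say a transcript $\tau$ is \emph{good} if $V_j$ is $\eps$-good for $A_j$ simultaneously for every $j=1,\ldots,\ell$. By the factorization above, if $\tau$ is good then
$$
\Pr[\hat A(1)=\tau]=\prod_{j=1}^{\ell}\Pr[A_j(1)=V_j]\;\le\;\prod_{j=1}^{\ell}e^{\eps}\Pr[A_j(0)=V_j]\;=\;e^{\ell\eps}\Pr[\hat A(0)=\tau],
$$
so $\tau$ is $\ell\eps$-good for $\hat A$. Thus it suffices to lower bound the probability that the transcript produced by $\hat A(1)$ is good.

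Finally, I would bound this probability by a round-by-round chain-rule argument. Condition on any prefix $V_1,\ldots,V_{j-1}$ of messages drawn from $\hat A(1)$; this prefix determines $A_j$, and by hypothesis $A_j$ is $(\eps,\delta)$-good, so the conditional probability that the freshly sampled $V_j\sim A_j(1)$ is $\eps$-good for $A_j$ is at least $1-\delta$. Multiplying these conditional probabilities across the $\ell$ rounds (using the independence of the fresh randomness in each round) gives
$$
\Pr[\tau\text{ is good}]\;\ge\;(1-\delta)^{\ell},
$$
which is exactly the statement that $\hat A$ is $(\ell\eps,1-(1-\delta)^{\ell})$-good. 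The main subtlety to handle carefully is that the $A_j$'s are \emph{not fixed in advance}: each $A_j$ is a function of the earlier random messages, so one has to phrase the $(1-\delta)^\ell$ bound as a conditional/chain-rule estimate rather than as independence of $\ell$ fixed events.
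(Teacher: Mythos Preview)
Your proposal is correct and follows essentially the same approach as the paper: factor the transcript probability as a product over rounds (using determinism of $C$ and fresh coins), show that if every $V_j$ is $\eps$-good for its $A_j$ then the transcript is $\ell\eps$-good for $\hat A$, and bound the probability that all rounds are good by $(1-\delta)^\ell$. If anything, you are more explicit than the paper about the chain-rule subtlety that each $A_j$ depends on the random prefix, which the paper simply asserts yields the $(1-\delta)^\ell$ bound.
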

\begin{proof}
Choose a random transcript $(q_1,\V_1,q_2,\V_2,\ldots,q_\ell,\V_\ell)$, and let
$A_1,A_2,\ldots,A_\ell$ be the algorithms defined by this transcript. By our assumptions all these algorithms are $(\eps,\delta)$-good.
Thus, with probability at least $(1-\delta)^\ell$, the
transcript $\hat{\V}=(q_1,\V_1,q_2,\V_2,\ldots,q_\ell,\V_\ell)$ is 
such that $\V_j$ is $\eps$-good for $A_j$ for all $1 \leq j\leq \ell$. It suffices, hence, to prove that when that happens the transcript $\hat{\V}$ is $\ell\eps$-good for $\hat{A}$, and indeed,
\begin{eqnarray*}
\Pr[\hat{A}(1) =  (q_1,\V_1,q_2,\V_2,\ldots,q_\ell,\V_\ell)] & = &\prod_{j=1}^\ell \Pr[A_j(1)=\V_j] \\
& \leq & \prod_{j=1}^\ell e^\eps \cdot \Pr[A_j(0)=\V_j] \\
& = & e^{\ell\eps} \cdot \prod_{j=1}^\ell \Pr[A_j(0)=\V_j] \\
& = & e^{\ell\eps}\cdot \Pr[\hat{A}(0) = (q_1,\V_1,q_2,\V_2,\ldots,q_\ell,\V_\ell)].
\end{eqnarray*}
The first and last equalities follow by independence and by the fact that the curator is deterministic. The inequality follows by the $\ell$-goodness of $\V_1,\ldots,\V_\ell$.
\end{proof}

\subsubsection{The Main Lemma}

Let $\Pi$ be an $\ell$-round, local, $\eps$-differentially private protocol, where $0<\eps \le 1$. 
For a possible curator's view $\myvec{c}$, let 
$$p_\A(\myvec{c}) = \pr_\A[\View_{C}(\myvec{X})=\myvec{c}] \quad \quad \mbox{and} \quad \quad  p_\myvec{0}(\myvec{c}) = \pr[\View_{C}(\myvec {0})=\myvec{c}],$$
where in $p_\A(\myvec{c})$ the probability is taken over the choice of $\myvec{X}$
according to the distribution $\A$ and the randomness of $\Pi$, and
in $p_\myvec{0}(\myvec{c})$ the probability is taken over the randomness of $\Pi$.

\begin{lemma}
\label{lem:interactive}
Assume we execute $\Pi$ with input sampled according to distribution $\A$, then for every $\nu > 32$, with probability at least $1-\ell\cdot \expon{-(\nu-32)^2/32 d}$, the curator's view satisfies:
$$p_\A(\myvec{c}) \leq  \expon{\ell \nu /d} \cdot p_\myvec{0}(\myvec{c}),$$
where the probability is taken over the random choice from distribution $\A$ and the randomness of \/ $\Pi$.
\end{lemma}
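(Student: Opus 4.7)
The plan is to decompose the $\ell$-round interactive protocol $\Pi$ into $\ell$ single-round non-interactive local protocols, apply \corref{non-interactive} to each such round, and then recombine the per-round bounds either by a telescoping product with union bound, or, equivalently, by the composition argument of \lemref{compo} applied to the single-super-party view suggested in the paper's outline.

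First I would set up the decomposition. For any realizable transcript prefix $\myvec{c}^{<j}$ (the queries and responses of all parties in rounds $1,\ldots,j-1$), view round $j$ of $\Pi$ as a non-interactive local protocol $\Pi_j$: the curator's round-$j$ queries $\myvec{q}_j$ are determined by $\myvec{c}^{<j}$, and then each party $p_i$ uses fresh round-$j$ randomness and its input $x_i$ to produce $a_{i,j}$, independently of other parties. Writing $r(\myvec{c})=p_\A(\myvec{c})/p_\myvec{0}(\myvec{c})$ and using the chain rule gives the factorization $r(\myvec{c})=\prod_{j=1}^\ell r_j$, where $r_j=p_\A(\myvec{c}_j\mid\myvec{c}^{<j})/p_\myvec{0}(\myvec{c}_j\mid\myvec{c}^{<j})$.

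Next I would check that each $\Pi_j$ is $2\eps$-differentially private (the level of privacy required by \corref{non-interactive}). Using the chain-rule identity $\alpha_i^{c_i}(x)=\prod_{j'}\beta_{i,j'}^{c_i}(x)$ from \secref{local}, the single-round conditional ratio satisfies
$$\frac{\beta_{i,j}^{c_i}(1)}{\beta_{i,j}^{c_i}(0)}=\frac{\alpha_i^{c_i^{\le j}}(1)/\alpha_i^{c_i^{\le j}}(0)}{\alpha_i^{c_i^{<j}}(1)/\alpha_i^{c_i^{<j}}(0)},$$
and since both factors lie in $[e^{-\eps},e^\eps]$ by the $\eps$-DP of $\Pi$, this round-$j$ ratio is at most $e^{2\eps}$. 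Hence each round is a valid $2\eps$-DP non-interactive local protocol to which \corref{non-interactive} applies, yielding, for each $j$, that with probability at least $1-\exp(-(\nu-32)^2/32d)$ one has $r_j\le \exp(\nu/d)$. A union bound over the $\ell$ rounds shows that with probability at least $1-\ell\exp(-(\nu-32)^2/32d)$ the bound $r_j\le\exp(\nu/d)$ holds simultaneously for all $j$, and multiplying these inequalities gives $r(\myvec{c})\le\exp(\ell\nu/d)$, which is exactly the conclusion.

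The main obstacle, as flagged by the two-step outline preceding the lemma, is that the per-round invocation of \corref{non-interactive} implicitly involves the input distribution $\A$, whereas in the interactive execution the input $\myvec{X}$ is drawn once from $\A$ and then reused, so conditional on the past transcript $\myvec{c}^{<j}$ the distribution of $\myvec{X}$ is a posterior rather than the prior $\A$. The clean way to bypass this is to phrase the combination step as an instance of the composition \lemref{compo}: view $\Pi$ as an interaction between the curator and a single super-party $P$ simulating all $n$ parties on joint input $\myvec{x}\in\{0,1\}^n$, with $\myvec{x}=\myvec{0}$ playing the role of ``input $0$'' and $\myvec{x}\sim\A$ playing the role of ``input $1$.'' Each round-$j$ algorithm $A_j$ of $P$ is then $(\nu/d,\exp(-(\nu-32)^2/32d))$-good directly by \corref{non-interactive} applied to $\Pi_j$, and \lemref{compo} (together with the elementary bound $1-(1-\delta)^\ell\le \ell\delta$) promotes this round-wise goodness to $(\ell\nu/d,\ell\exp(-(\nu-32)^2/32d))$-goodness of the full protocol, which is precisely the statement of \lemref{interactive}.
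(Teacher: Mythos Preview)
Your proposal is correct and follows essentially the same route as the paper's proof: make the parties stateless, show via the ratio $\alpha_i^{c^{\le j}}(1)/\alpha_i^{c^{\le j}}(0)\cdot\alpha_i^{c^{<j}}(0)/\alpha_i^{c^{<j}}(1)$ that each round is $2\eps$-differentially private so that \corref{non-interactive} applies per round, then collapse all $n$ parties into a single super-party with bit input ($\myvec{0}$ versus $\myvec{X}\sim\A$) and invoke \lemref{compo}. Your explicit identification of the posterior-versus-prior obstacle, and your explanation that the super-party reformulation is what lets \lemref{compo} absorb it, is exactly the point the paper handles by constructing $\Pi_1$, so the two arguments coincide.
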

\begin{proof}
Recall that in the interactive local model, a protocol is composed of $\ell$-rounds where in each round the curator sends a query to each party and the
party sends an answer. We modify the protocol, to make the parties stateless, by introducing the following changes to the interaction between the curator and every party $p_i$. Both changes do not affect the privacy of the protocol, nor its outcome.
\begin{enumerate}
\item In round $j$ the curator sends all queries and answers $q_{1},a_{1},$ $\dots,a_{j-1},q_{j}$ it sent and received from $p_i$ 
in previous rounds%
\footnote{To simplify notation, we omit the subscript $i$ from the queries
and answers.}.
\item Party $p_i$ chooses a fresh random string in each round, that is, in round $j$, party $p_i$ chooses with uniform distribution a random string that is consistent with the queries and answers it got in the previous rounds (since we assume that the parties are computationally unbounded, such choice is possible). Party $p_i$ uses this random string to answer the $j$th query.  In other words, we can consider $p_i$ as applying an algorithm $A_{j}$ to compute the $j$th answer; this algorithm depends on the previous queries and answers and uses an independent random string $r_j$.
\end{enumerate}

We next claim that $A_{j}$ is $2\eps$-differentially private.  That is,
we claim that the probability that $a_{j}$ is generated given the previous
queries and answers is roughly the same when $p_i$ holds the bit $0$ and
when $p_i$ holds the bit $1$.  
 For a transcript $c$ of the first $j$ rounds between $p_i$ and the curator $C$ and for $x_i \in \{0,1\}$,
we denote by $R^{x_i}_{c}$ the set of all random strings $r$, such that 
$p_i$ with private input $x_i$ and random input $r$ 
sends at each round messages according to $c$, provided it 
received all messages according to $c$ in previous rounds.
Recall that $\pr[r_j \in R^{x_i}_{c}]$ is denoted $\alpha_i^{c}(x_i)$.
Let $c_{j} = q_{1},a_{1},\dots,q_{j-1},a_{j-1},q_j,a_j$ be a $j$-round transcript
and let $c'_{j} = q_{1},a_{1},\dots,q_{j-1},a_{j-1},q_j$ be the prefix
of $c_{j}$ without the $j$th round answer $a_j$ (that is, $c_{j} = c'_{j} \circ a_j$).
Note that, since $r_j$ must be consistent with the $c'_{j}$, it holds for every $x_i \in \set{0,1}$
that ${\Pr[A_{j}(x_1)= a_{j}]} = {\Pr[r_j \in R^{x_1}_{c_j} |r_j \in R^{x_1}_{c'_{j}}]}$.
We therefore need to show that 
\begin {equation*}
e^{-2\eps} \le   \frac{\Pr[A_{j}(1)= a_{j}]}{\Pr[A_{j}(0) = a_{j}]}  = 
\frac{\Pr[r_j \in R^{1}_{c_j} |r_j \in R^{1}_{c'_{j}}]}{\Pr[r_j \in R^{0}_{c_j} |r_j \in R^{0}_{c'_{j}}]}
\le e^{ 2\eps},
\end {equation*}
To show that, we use the following two
facts, which follow from \defref{local-model-privacy-individual}:

\begin {equation}\label{eqn:private-j}
e^{-\eps} \le  \frac{\alpha_i^{c_j}(1)}{\alpha_i^{c_j}(0)} = \frac{\Pr[r_j \in R^{1}_{c_j}]}{\Pr[r_j \in R^{0}_{c_j}]}
\le e^{ \eps},
\end {equation}
and 
\begin {equation} \label{eqn:private-j-1}
e^{-\eps} \le \frac{\alpha_i^{c'_j}(1)}{\alpha_i^{c'_j}(0)} = \frac{\Pr[r_j \in R^{1}_{c'_{j}}]}{\Pr[r_j \in R^{0}_{c'_{j}}]}
\le e^{\eps}
\end {equation}
Hence, we have
\begin{eqnarray*}
r \eqdef \frac{\Pr[A_{j}(1)= a_{j}]}{\Pr[A_{j}(0) = a_{j}]}
& =  & \frac {\Pr[r_j \in R^{1}_{c_j} \,\wedge\, r_j \in R^{1}_{c'_{j}}]} {\Pr[r_j \in R^{1}_{c'_{j}}]} \cdot 
    \frac {\Pr[r_j \in R^{0}_{c'_{j}}]}{\Pr[r_j \in R^{0}_{c_j} \,\wedge\, r_j \in R^{0}_{c'_{j}}]} \\
& =  & \frac {\Pr[r_j \in R^{1}_{c_j}]} {\Pr[r_j \in R^{1}_{c'_{j}}]} \cdot 
    \frac {\Pr[r_j \in R^{0}_{c'_{j}}]}{\Pr[r_j \in R^{0}_{c_j}]} = 
    \frac{\alpha_i^{c_j}(1)}{\alpha_i^{c_j}(0)}\cdot
                \frac{\alpha_i^{c'_j}(0)}{\alpha_i^{c'_j}(1)}.         
\end{eqnarray*}

Now, by using the right inequality in \eqnref{private-j} and the left inequality in \eqnref{private-j-1}, we get 
that $r \le e^{2 \eps}$ and similarly, by using the left inequality in \eqnref{private-j} and 
the right inequality in \eqnref{private-j-1}, we get that $r \geq e^{-2 \eps}$.
Thus, the answers of the $n$ parties in round $j$ are $2 \eps$-private,
and we can apply \corref{non-interactive} to the concatenation of the $n$ answers.

We now use the above protocol to construct a protocol $\Pi_1$
between a single party,  holding a one bit input $x$ and a
curator. Throughout the execution of the protocol the party simulates all
$n$ parties as specified by the original protocol $\Pi$ (i.e., sends messages to
the curator with the same distribution as the $n$ parties send them).  If
the bit of the party in $\Pi_1$ is $1$ it chooses the $n$ input bits of the
$n$ parties  in $\Pi$ according to distribution $\A$. If the bit of the
party in $\Pi_1$ is $0$ it chooses the $n$ input bits of the $n$ parties  in
$\Pi$ to be the all-zero vector. By \corref{non-interactive}
we can apply the composition lemma --
\lemref{compo} -- to the composition of the $\ell$, $2\eps$-differentially 
private, non-interactive protocols and the lemma follows.
\end{proof}

\begin{corollary}
\label{cor:prB}
Let $0 < \eps \leq 1$. For every $\nu>32$ and for every set $D$ of
views in an $\ell$-round, $\eps$-differentially private, local
protocol,
$$\pr[\View_{C}({\myvec 0})\in D] \geq
\frac{
\pr_\A[\View_{C}({\myvec X})\in D]- \ell\cdot\expon{-(\nu-32)^2 /32 d}}
{\expon{\ell\nu /d}}.
$$
\end{corollary}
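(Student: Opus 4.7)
The plan is to derive the corollary directly from \lemref{interactive} by a standard ``split into good and bad views'' argument. Define $G$ to be the set of curator views $\myvec{c}$ for which $p_\A(\myvec{c}) \leq \expon{\ell\nu/d}\cdot p_{\myvec{0}}(\myvec{c})$. By \lemref{interactive}, when $\myvec{X}$ is drawn according to $\A$, the probability that the curator's view falls in $G$ is at least $1-\ell\cdot\expon{-(\nu-32)^2/32d}$. This will be the only nontrivial ingredient; everything else is bookkeeping.

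Next I would decompose the probability of $D$ under $\A$ into its contributions from $D\cap G$ and $D\setminus G$. For the first part I would use the defining bound of $G$ pointwise on $\myvec{c}$ and sum, giving
\[
\pr_\A[\View_C(\myvec X)\in D\cap G]\;\leq\;\expon{\ell\nu/d}\cdot\pr[\View_C(\myvec 0)\in D\cap G]\;\leq\;\expon{\ell\nu/d}\cdot\pr[\View_C(\myvec 0)\in D].
\]
For the second part I would just bound it by $\pr_\A[\View_C(\myvec X)\notin G]\leq \ell\cdot\expon{-(\nu-32)^2/32d}$, using \lemref{interactive}.

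Combining these two bounds yields
\[
\pr_\A[\View_C(\myvec X)\in D]\;\leq\;\expon{\ell\nu/d}\cdot\pr[\View_C(\myvec 0)\in D]\;+\;\ell\cdot\expon{-(\nu-32)^2/32d}.
\]
A simple rearrangement (subtract the error term, divide by $\expon{\ell\nu/d}$) produces exactly the inequality claimed in the corollary. I do not anticipate any technical obstacle here: the real work was done in \lemref{interactive}, and this corollary is just the standard conversion from a likelihood-ratio bound that holds with high probability to a bound on probabilities of arbitrary events.
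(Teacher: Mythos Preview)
Your proposal is correct and essentially identical to the paper's own proof: the paper defines $D_1=\{\myvec{c}\in D: p_\A(\myvec{c})\le e^{\ell\nu/d}p_{\myvec 0}(\myvec{c})\}$ and $D_2=D\setminus D_1$, which are exactly your $D\cap G$ and $D\setminus G$, and then carries out the same two bounds and rearrangement.
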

\begin{proof}
Let 
$$D_1=\set{\myvec{c} \in D: \pr_\A[\View_{C}({\myvec X})=\myvec{c}]\leq \expon{\ell\nu /d}\pr[\View_{C}({\myvec 0})=\myvec{c}]}$$
and
$$D_2=\set{\myvec{c} \in D: \pr_\A[\View_{C}({\myvec X})=\myvec{c}] > \expon{\ell\nu /d}\pr[\View_{C}({\myvec 0})=\myvec{c}]}.$$
That is, $D_2 = D \setminus D_1$. By \lemref{interactive},
$\pr_\A[\View_{C}({\myvec X}) \in D_2]\leq  \ell \expon{-(\nu-32)^2 /32 d}$, and, furthermore,
$\pr[\View_{C}({\myvec 0}) \in D_1]\geq \pr_\A[\View_{C}({\myvec X}) \in D_1]/\expon{\ell\nu /d}$.
Thus,
\begin{eqnarray*}
\pr[\View_{C}({\myvec 0}) \in D]\geq \pr[\View_{C}({\myvec 0}) \in D_1]  & \geq & 
\frac{\pr_\A[\View_{C}({\myvec X}) \in D_1]}{e^{\ell\nu /d}} \\ & = &
\frac{\pr_\A[\View_{C}({\myvec X}) \in D]-\pr_\A[\View_{C}({\myvec X}) \in D_2]}{e^{\ell\nu /d}} \\
& \geq &
\frac{\pr_\A[\View_{C}({\myvec X}) \in D]-\ell e^{-(\nu-32)^2 /32 d}}{e^{\ell\nu /d}}. \quad
\end{eqnarray*}
\vspace*{-0.5cm}
\end{proof}

\subsection{Completing the Lowerbound for Gap-Threshold and Sum in the Local Model}
\label{sec:Local}

We now complete the proof that in any $\ell$-round, 
$\eps$-differentially private, local protocols for the gap-threshold function,
namely, $\GAP_{0,\tau}$, if $\tau \ll \sqrt{n}$ and $\ell$ is small, then 
the curator errs with constant probability.

Recall that we constructed the distribution $\A$ in which each bit in the input is chosen (independently at random) to be one with probability $\alpha$ and zero with probability $1-\alpha$.  
\lemref{Chernoff_XbyA}, which follows from a standard Chernoff bound argument, states that when generating a vector $\vect{X_1,\dots,X_n}$ according to $\A$, the sum $\sum_{i=1}^n X_i$ is concentrated around its expected value, which is $\alpha n$ (recall that $\alpha = 1/(\eps\sqrt{d n})$).
We apply the following Chernoff bound: Given $n$ zero-one random variables $X_1, \dots, X_n$ and $0 < t <1$, $\pr\left[\sum_{i=1}^{n}{X_i} \le (1-t)\mu \right] < \expon{-\frac{t^2 \mu}{2}}$, where $\mu =\sum_{i=1}^{n}{\expected[X_i]}$.
\begin{lemma}
\label{lem:Chernoff_XbyA}
$\pr_{\A}\left[\sum_{i=1}^{n}{X_i}\leq (1-\gamma) \alpha n\right]
< \expon{-\frac{\gamma^2\sqrt{n}}{2\eps\sqrt{d}}}$
for every $0 < \gamma <1$.
\end{lemma}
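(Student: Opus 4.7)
The plan is to directly apply the Chernoff bound stated just before the lemma. First I would identify the expectation: under distribution $\A$, each $X_i$ is an independent Bernoulli random variable with $\Pr[X_i = 1] = \alpha$, so $\expected[X_i] = \alpha$ and hence $\mu = \sum_{i=1}^n \expected[X_i] = \alpha n$. The $X_i$'s are independent zero-one random variables, so the stated Chernoff bound applies.

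Next I would instantiate the bound with $t = \gamma$ (which satisfies $0 < t < 1$ by hypothesis), obtaining
\[
\pr_{\A}\left[\sum_{i=1}^{n} X_i \leq (1-\gamma)\alpha n\right] \;<\; \expon{-\frac{\gamma^2 \alpha n}{2}}.
\]
Finally, I would substitute the definition $\alpha = 1/(\eps\sqrt{dn})$ from \eqref{eq:alpha}, which gives $\alpha n = \sqrt{n}/(\eps\sqrt{d})$, and therefore
\[
\expon{-\frac{\gamma^2 \alpha n}{2}} = \expon{-\frac{\gamma^2 \sqrt{n}}{2\eps\sqrt{d}}},
\]
which matches the claimed bound.

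There is no real obstacle here; the lemma is a direct substitution into the multiplicative Chernoff bound cited in the paragraph preceding the lemma statement. The only thing to double-check is that independence of the $X_i$'s holds (it does, by the definition of $\A$) and that the arithmetic substitution of $\alpha$ is carried out correctly.
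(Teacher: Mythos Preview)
Your proposal is correct and follows essentially the same approach as the paper's proof: apply the stated Chernoff bound with $\mu = \alpha n$ and $t = \gamma$, then substitute $\alpha n = \sqrt{n}/(\eps\sqrt{d})$.
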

\begin {proof}
We use the above bound with $\mu = \alpha n= \frac{\sqrt{n}}{\eps\sqrt{d}}$.
Thus,
\begin{eqnarray*}
\pr_{\A}\left[\sum_{i=1}^{n}{X_i}\leq(1-\gamma)\alpha n \right] & < & \expon{-\frac{\alpha n \gamma^2}{2}} \\
& < & \expon{-\frac{\gamma^2 \sqrt{n}}{2\eps\sqrt{d}}}.
\end{eqnarray*}
\end {proof} 

On one hand, by \corref{prB}, the distributions on the outputs when the input vector is taken
from $\A$ and when it is the all zero vector are {\em not } far apart.
On the other hand, by \lemref{Chernoff_XbyA}, with high
probability the number of ones in the inputs distributed according to $\A$
is fairly big. These facts are
used in \thmref{local-gap} to prove the lowerbound.

\begin{theorem}
\label{thm:local-gap}
Let $0 < \eps \leq 1$. There exist constants $c>0$ and $p>0$ such that in any $\ell$-round, $\eps$-differentially private, 
local protocol for computing $\GAP_{0,\tau}$ for $\tau = c\frac{\sqrt{n}}{\eps \ell \sqrt{\log
\ell}}$ the curator errs with probability at least $p$.
\end{theorem}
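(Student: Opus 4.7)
The plan is to argue by contradiction: assume a local, $\eps$-differentially private, $\ell$-round protocol $\Pi$ that computes $\GAP_{0,\tau}$ correctly with probability larger than some constant $1-p$, and derive a contradiction using Corollary~\ref{cor:prB} together with Lemma~\ref{lem:Chernoff_XbyA}. Let $D$ denote the set of curator views on which the curator's output is $1$.

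First, I would use the correctness requirement to lower bound $\pr_\A[\View_C({\myvec X})\in D]$. Choose the parameter $d$ (and hence $\alpha = 1/(\eps\sqrt{dn})$) so that $\tau \leq (1-\gamma)\alpha n$ for some fixed $\gamma\in(0,1)$ (say $\gamma = 1/2$); this forces $d$ to scale like $\sqrt{n}/(\eps\tau)$ up to constants. Under this choice, Lemma~\ref{lem:Chernoff_XbyA} guarantees that with probability at least $1-\exp(-\gamma^2\sqrt{n}/(2\eps\sqrt{d}))$ the realized input ${\myvec X}$ has $\SUM({\myvec X})\geq \tau$, so $\GAP_{0,\tau}({\myvec X})=1$. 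By correctness of the protocol the curator then outputs $1$ with probability at least $1-p$, conditionally; combining with the Chernoff tail yields $\pr_\A[\View_C({\myvec X})\in D]\geq q$ for a constant $q>0$ (as long as $\sqrt{n}/(\eps\sqrt{d})$ is bounded below, which will be ensured by our choice of parameters).

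Next, I would feed this lower bound into Corollary~\ref{cor:prB}. The corollary says
\[
\pr[\View_C({\myvec 0})\in D] \;\geq\; \frac{q - \ell\cdot \exp(-(\nu-32)^2/32d)}{\exp(\ell\nu/d)}.
\]
To finish, I need to pick $\nu$ and $d$ so that (i) the numerator remains a positive constant, i.e. $\ell\exp(-(\nu-32)^2/32d)\leq q/2$, and simultaneously (ii) the denominator stays bounded, i.e. $\ell\nu/d = O(1)$. From (ii) set $d = C_1\ell\nu$ for a constant $C_1$; substituting into (i) gives the requirement $(\nu-32)^2 \gtrsim d\log\ell \asymp \ell\nu\log\ell$, which is satisfied by choosing $\nu = C_2\ell\log\ell$ for a suitably large constant $C_2$. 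Then $d = \Theta(\ell^2\log\ell)$, and the lower bound becomes $\pr[\View_C({\myvec 0})\in D]\geq p^*$ for a universal constant $p^*>0$. But on input ${\myvec 0}$ one has $\GAP_{0,\tau}({\myvec 0})=0$, so correctness requires $\pr[\View_C({\myvec 0})\in D]\leq p$. Taking $p<p^*$ produces the contradiction.

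Finally, I would unwind the constraint on $\tau$: the choice $d = \Theta(\ell^2\log\ell)$ together with the condition $\tau\leq (1-\gamma)\alpha n = (1-\gamma)\sqrt{n}/(\eps\sqrt{d})$ yields $\tau \leq c\sqrt{n}/(\eps\ell\sqrt{\log\ell})$ for an absolute constant $c>0$, matching the statement of the theorem. The main obstacle is the joint tuning of $\nu$ and $d$: one must simultaneously keep the ``blow up'' factor $\exp(\ell\nu/d)$ bounded (to preserve the constant gap between the two distributions on views) and keep the $\ell\exp(-(\nu-32)^2/32d)$ term small (so the union bound over the $\ell$ rounds does not wipe out the Chernoff gain). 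The logarithmic overhead $\sqrt{\log\ell}$ in $\tau$ is precisely the price of this balance.
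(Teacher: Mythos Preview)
Your proposal is correct and follows essentially the same approach as the paper: both use \lemref{Chernoff_XbyA} to argue that under $\A$ the sum typically exceeds $\tau$, then invoke \corref{prB} with the parameter choice $\nu=\Theta(\ell\log\ell)$, $d=\ell\nu=\Theta(\ell^2\log\ell)$ to transfer a constant lower bound on $\pr_\A[D]$ to a constant lower bound on $\pr[\View_C({\myvec 0})\in D]$, yielding the stated $\tau=\Theta(\sqrt{n}/(\eps\ell\sqrt{\log\ell}))$. The only cosmetic difference is that the paper phrases it as a two-case split on whether $\pr_\A[D]\geq 0.99$ (showing the curator errs in each case) rather than your contradiction from assumed correctness; the ingredients and parameter balancing are identical.
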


\begin{proof}
Fix any $\ell$-round, $\eps$-differentially private, local protocol
for computing $\GAP_{0,\tau}$.
Recall that in the local model the curator is assumed to be 
deterministic.  Thus, the curator, having seen its overall view
of the execution of the protocol $c$, applies
a deterministic algorithm $G$ to $c$, where $G(c)$ is the output of the
protocol (which supposed to answer $\GAP_{0,\tau}(x_1,\dots,x_n)$ correctly).
Let $\tau=\alpha n/2 =\sqrt{n}/(2\eps\sqrt{d})$.

Denote by $D$ the set vectors of communication for which the curator
answers 1, i.e., $D  \eqdef \set{ \myvec{c} : G(\myvec{c}) = 1 }$.
The idea of the proof is as follows.
If the
probability of $D$ under the distribution $\A$ is small, then the curator
has a big error when the inputs are  distributed according to
$\A$. Otherwise, by \corref{prB}, the probability of $D$ when the 
inputs are all zero is big, hence the curator has a big error when the inputs
are the all-zero string. Formally, there are two cases:
     
\begin{description}
\item[Case I: \protect{$\pr_\A[D] < 0.99$.}] We consider the event that the sum of the inputs is at least
$\tau=\alpha n/2$ and the curator returns zero as an answer, that is, the curator errs.

We show that when the inputs are distributed according to $\A$ the probability of the complementary of this event is bounded away from 1.  By
the union bound the probability of the complementary event is at most $\pr_\A\left[\sum_{i=1}^{n}{X_i} < 0.5 \alpha n\right] +
\pr_\A[D] $.  By \lemref{Chernoff_XbyA},
\begin{eqnarray*}
\pr_\A[D]
   + \pr_\A\left[\sum_{i=1}^{n}{X_i} < 0.5 \alpha n\right]
 \le  0.99 + \expon{-0.25\sqrt{n}/(2\eps\sqrt{d})}
 \approx  0.99.
\end{eqnarray*}
Thus, in this case, with probability $\approx 0.01$ the curator errs.
                        
\item[Case II: \protect{$\pr_\A[D] \ge 0.99$.}] Here, we consider the event that the input is the all-zero string and the curator answers 1, that is,
the curator errs. 

We use \corref{prB} and show that when the inputs are all zero, the probability of this 
event is bounded away from 0 when taking $\nu=\theta(\ell\log \ell)$ and
$d=\ell\nu=\theta(\ell^2\log \ell)$,
\begin{eqnarray*}
\pr[\View_{C}({\myvec 0})\in D]
&\ge& \frac{\pr_\A[D] - \ell\expon{-(\nu-32)^2 / 32 d}}{\expon{\ell\nu /d}} 
> \frac{0.99-0.5}{\expon{1}}> 0.01.
\end{eqnarray*}
Thus, in this case, with probability 
at least $0.01$,
the curator errs. As $d=\theta(\ell^2\log \ell)$, we get that
$\tau=\sqrt{n}/(2\eps \sqrt{d})=
\theta(\sqrt{n}/(\eps \ell \sqrt{\log \ell}))$.
\end{description}
\end{proof}

By applying the local model variant of \corref{reduction-gap-sum}, we get our lowerbound for $\SUM_n$ as a corollary of \thmref{local-gap}:

\begin{corollary}
\label{cor:local-sum}
Let $0 < \eps \leq 1$. There exist constants $\delta>0$ and $p>0$ such that in any $\ell$-round, 
$\eps$-differentially private, local protocol for
computing $\SUM_n$ the curator errs with probability at least $p$ by 
at least $\frac {\delta \sqrt{n}}{\eps \ell \sqrt{\log \ell}}$.
\end{corollary}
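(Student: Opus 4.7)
The plan is to obtain the corollary by a direct contrapositive application of the local-model variant of Corollary \ref{cor:reduction-gap-sum}, using Theorem \ref{thm:local-gap} as the input. Concretely, suppose for contradiction that there exists an $\ell$-round, $\eps$-differentially private local protocol $\Pi$ that $(\gamma,\tau')$-approximates $\SUM_n$ for some $\gamma < p$ and $\tau' < \delta\sqrt{n}/(\eps\ell\sqrt{\log\ell})$, where $p$ is the constant from Theorem \ref{thm:local-gap} and $\delta = c/2$ for the constant $c$ from that theorem.

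Next, I would instantiate the reduction in Corollary \ref{cor:reduction-gap-sum} with $\kappa = 0$ and $\tau = 2\tau' < c\sqrt{n}/(\eps\ell\sqrt{\log\ell})$. The reduction (post-processing the numerical approximation by a threshold at $\kappa + \tau/2$) yields an $\ell$-round, $\eps$-differentially private local protocol computing $\GAP_{0,\tau}$ correctly with probability at least $1 - \gamma > 1 - p$. Note that the local-model version of the reduction preserves both the round complexity and the $\eps$-differential privacy guarantee, since the threshold is a deterministic post-processing step applied by the curator on its own view and does not require further interaction with the parties.

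Finally, I would invoke Theorem \ref{thm:local-gap}: for this value of $\tau$, any $\ell$-round $\eps$-differentially private local protocol for $\GAP_{0,\tau}$ must err with probability at least $p$, contradicting the success probability obtained above. Hence no such protocol $\Pi$ exists, and the claimed lowerbound on the error of $\SUM_n$ follows with the constants $\delta = c/2$ and $p$ inherited directly from Theorem \ref{thm:local-gap}.

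Since this is just a reduction plus a post-processing argument, there is no real obstacle; the only point to be careful about is that the local-model version of Corollary \ref{cor:reduction-gap-sum} (which is explicitly promised in its statement) preserves both privacy and round count under the thresholding operation — this is immediate because the curator applies the threshold to a value it already holds, with no additional queries to the parties.
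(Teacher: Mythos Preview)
Your proposal is correct and follows essentially the same approach as the paper: assume a too-accurate $\SUM_n$ protocol, apply the local-model variant of \corref{reduction-gap-sum} to obtain a $\GAP_{0,2\tau}$ protocol, and derive a contradiction with \thmref{local-gap}. Your write-up is in fact more detailed than the paper's (you make explicit the choice $\delta=c/2$ and the fact that thresholding is a post-processing step by the curator), but the argument is identical.
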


\begin {proof}
Let $\Pi$ be an $\ell$-round, $\eps$-differentially private, local protocol for
computing $\SUM_n$, for which the curator errs by at most $\tau$ with probability at most $p$. 
By \corref{reduction-gap-sum} there exists an $\ell$-round, 
$\eps$-differentially private, local protocol for computing $\GAP_{0,2\tau}$ errs with probability at most $p$.
\end {proof}
                
\section{Lowerbounds for Binary Sum and Gap-Threshold in the Distributed Model}
\label{sec:dist}

We prove that, in any $\ell$-round, fixed-communication,
$(t,\eps)$-differentially private protocol computing the binary sum
with additive error less than $\sqrt{n}/\tilde{O}(\ell)$, the number of
messages sent in the protocol is $\Omega(nt)$. 
In the heart of our proof 
is the more general observation that in the information theoretic setting,
a party that has at most $t$ neighbors must protect its privacy with respect 
to his neighbors, since this set separates it from the rest of the parties.
Thus, any such party, is essentially as limited as any party participating in a 
protocol in the local communication model.

\subsection{From Distributed to Local Protocols}
\label{sec:reduction}

We start with the transformation of a distributed protocol, using a small number of messages
to a protocol in the local model.

\begin{lemma}
\label{lem:transformation}
If there exists an  $\ell$-round, fixed communication, $(t,\eps)$-differentially private
protocol that $(\gamma,\tau)$-approximates $\SUM_n$
sending at most $n(t+1)/4$ messages, then there exists an
$(\ell+1)$-round, $\eps$-differentially private protocol in the local model that 
$(\gamma,\tau)$-approximates $\SUM_{n/2}$.
\end{lemma}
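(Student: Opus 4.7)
The plan is to extract a local protocol from the distributed one by exploiting that so few messages force many parties to be \emph{lonely}, i.e., to have at most $t$ neighbors in the communication graph. Since every active channel carries at least one message, there are at most $n(t+1)/4$ active channels, and hence $\sum_i \deg(p_i) \le n(t+1)/2$. If more than $n/2$ parties were popular (degree $\ge t+1$), this sum would exceed $(n/2)(t+1)$, a contradiction. Thus at least $n/2$ parties are lonely; fix any $n/2$ of them and denote them $p_{\sigma(1)},\ldots,p_{\sigma(n/2)}$.

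Next I would build an $(\ell+1)$-round local protocol for $\SUM_{n/2}$ in which the $i$-th local party holds input $x_i$ and simulates $p_{\sigma(i)}$, while the curator internally simulates the remaining $n/2$ distributed parties with their inputs fixed to $0$. The curator routes all communication: in each local round it sends to party $i$ the incoming messages $p_{\sigma(i)}$ would receive in the next distributed round (known to the curator, being either outputs of its own simulated parties or outgoing messages that other local parties sent to the curator in the previous local round), and party $i$ replies with the outgoing messages $p_{\sigma(i)}$ would produce. An extra round handles the first-round messages, which depend only on the inputs, and the curator's final output computation. Because the simulated inputs are $0$, the total binary sum equals $\SUM_{n/2}(x_1,\ldots,x_{n/2})$, so the $(\gamma,\tau)$-approximation guarantee of the distributed protocol transfers directly.

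The crux, which I expect to be the main obstacle, is the privacy analysis. Fix any local party $i$ and let $N_i \subseteq [n]$ denote the neighbors of $p_{\sigma(i)}$ in the distributed communication graph, so $|N_i|\le t$. Since the distributed protocol is $(t,\eps)$-differentially private and swapping $x_{\sigma(i)}$ yields an $N_i$-neighboring input vector (note $\sigma(i)\notin N_i$), the coalition view $\View_{N_i}$ is $\eps$-differentially private in $x_{\sigma(i)}$. The curator's view in the local protocol is, up to deterministic reorganization, the joint view of every distributed party other than $p_{\sigma(i)}$. The key additional step is the Markov property of the communication graph: since $N_i$ separates $p_{\sigma(i)}$ from every other party, the view of any party outside $N_i\cup\{p_{\sigma(i)}\}$ depends on $x_{\sigma(i)}$ only through the messages it receives from $N_i$, and those messages are already contained in $\View_{N_i}$. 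Formalizing this via a round-by-round induction, one shows that conditioned on $\View_{N_i}$ the rest of the curator's view has a distribution independent of $x_{\sigma(i)}$. Hence the curator's view is a post-processing of $\View_{N_i}$ with $x_{\sigma(i)}$-independent auxiliary randomness, so $\eps$-DP is preserved; by \clmref{local-model-def-equivalence} the resulting individual privacy guarantee is equivalent to the collective $\eps$-DP required for a local protocol.
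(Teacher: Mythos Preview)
Your proposal is correct and follows essentially the same approach as the paper: count lonely parties, route all distributed messages through a curator, and derive $\eps$-differential privacy of the curator's view from the $(t,\eps)$-privacy of the original protocol applied to a lonely party's (size-$\le t$) neighbor coalition, using that this coalition separates the party from everyone else. The paper organizes the construction in two stages (first all $n$ parties become local parties with the curator merely relaying; then $p_1$ absorbs the popular parties with inputs fixed to $0$) and proves the separation step via an explicit factorization of transcript probabilities using \lemref{mult}, whereas you collapse the construction into one step (curator simulates the popular parties directly) and phrase the separation as a Markov/post-processing argument---these are presentational rather than substantive differences.
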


\begin{proof}
The intuition behind the proof is that in the information theoretic model
if an input of a party affects the output, then the neighbors of this party
must learn information  on its input.  Recall that a party in a protocol $\Pi$ is
lonely if it communicates with at most $t$ other parties and it is popular
otherwise. If a party $p_i$ is lonely then it has most $t$ neighbors,
thus, from the privacy requirement in $(t,\eps)$-differentially private
protocols, they  are not allowed to learn ``too much" information on the
input of $p_i$. Therefore, the inputs of lonely parties cannot affect the
output of the protocol by too much, thus, since there are many lonely
parties, the protocol must have a large error.

Formally, assume that there is a distributed protocol $\Pi$ satisfying the conditions
in the lemma.  As the protocol sends at most $n(t+1)/4$ messages, 
the protocol uses at most $n(t+1)/4$
channels. Since each channel connects two parties, there are at least $n/2$
lonely parties.  We will construct a protocol in the local model which
$(\gamma,\tau)$-approximates $\SUM_{n/2}$ in two steps: 
In the first step, which is the main part of the proof, we
construct a protocol $\P$ in the local model which
                $(\gamma,\tau)$-approximates $\SUM_{n}$
                and only protects the privacy of the lonely parties.
In the second step, we fix the inputs of the popular parties and obtain a
    protocol $\P'$ for $n/2$ parties that protects the privacy of
    all parties.

\paragraph{First Step.}
We convert the distributed protocol $\Pi$ to a protocol $\P$ in the local
model as follows: Recall that in the local model each round consists of two phases 
where in the first phase the curator sends queries to the parties and in the second phase
parties send the appropriate responses. 
We hence have a single round in $\P$ for every round of $\Pi$ such that 
every message $m$ that Party $p_j$ sends to Party $p_k$ in round $i$ in
protocol $\Pi$, Party $p_j$ sends $m$ to the curator in round $i$ and
the curator sends $m$ to Party $p_k$ in the first phase of round $i+1$. Finally, at the end
of the protocol Party $p_1$ sends the output to the curator.

We next prove that $\P$ protects the privacy of lonely parties.
Without loss of generality, let $p_1$ be a lonely party, let $T$ be the set of size
at most $t$ containing the neighbors of $p_1$, and let
$R=\set{p_1,\dots,p_n}\setminus (T\cup\set{p_1})$.
See \figref{p1} for a description of these sets.
Fix any neighboring vectors
of inputs $\myvec{x}$ and $\myvec{x'}$ which differ on $x_1$.  The view of
the curator in $\P$ contains all  messages sent in the protocol.  It
suffices to prove that for every view $v$,
\begin{eqnarray}
\label{eqn:views}
\Pr[\View^{\P}_\C(\myvec{x}) =v ]
       & \leq &e^\eps\cdot \Pr[\View^{\P}_\C(\myvec{x'}) = v ] \;
\end{eqnarray}
(by simple summation it will follow for every set of views $\V$).

\begin{figure}[htbp]
  \begin{center}
    \setlength{\unitlength}{0.00066667in}
\begingroup\makeatletter\ifx\SetFigFont\undefined%
\gdef\SetFigFont#1#2#3#4#5{%
  \reset@font\fontsize{#1}{#2pt}%
  \fontfamily{#3}\fontseries{#4}\fontshape{#5}%
  \selectfont}%
\fi\endgroup%
{\renewcommand{\dashlinestretch}{30}
\begin{picture}(3355,1827)(0,-10)
\put(3047,831){\makebox(0,0)[b]{{\SetFigFont{14}{16.8}{\rmdefault}{\mddefault}{\updefault}$R$}}}
\put(1547,906){\ellipse{450}{1200}}
\put(3047,906){\ellipse{600}{1800}}
\thicklines
\path(422,906)(1472,1431)
\path(422,831)(1322,756)
\path(422,831)(1397,456)
\path(1697,1356)(2897,1581)
\path(1697,1206)(2747,1131)
\path(1773,692)(2823,542)
\path(1697,456)(2822,231)
\path(1772,981)(2747,756)
\put(347,1356){\makebox(0,0)[b]{{\SetFigFont{14}{16.8}{\rmdefault}{\mddefault}{\updefault}$p_1$}}}
\put(1547,831){\makebox(0,0)[b]{{\SetFigFont{14}{16.8}{\rmdefault}{\mddefault}{\updefault}$T$}}}
\thinlines
\put(347,906){\blacken\ellipse{212}{212}}
\put(347,906){\ellipse{212}{212}}
\end{picture}
}
    \caption{The partition of the parties to sets.}
    \label{fig:p1}
  \end{center}
\end{figure}

Fix a view $v$ of the curator.
For a set $A$, define $\alpha_A$ and ${\alpha'}_A$ as the probabilities in $\Pi$
that in each round the set $A$
with inputs from $\myvec{x}$ and $\myvec{x'}$ respectively
sends messages according to $v$ if it gets
messages according to $v$ in previous rounds
(these probabilities are taken over the random inputs of the parties in $A$).
Observe that if $p_1\notin A$, then $\alpha_A=\alpha'_A$ 
(since $\myvec{x}$ and $\myvec{x'}$ only differ on $x_1$).
By simulating $p_1$, $T$, $R$ by three parties
and applying \lemref{mult}, and by the construction of $\P$ from $\Pi$
\begin{eqnarray*}
\Pr\left[\View^{\P}_\C(\myvec{x})=v \right] & = & \alpha_{\set{p_1}}\cdot\alpha_T \cdot \alpha_{R}, \quad \mbox{and} \\
\Pr\left[\View^{\P}_\C(\myvec{x'})=v \right] & = & \alpha'_{\set{p_1}}\cdot\alpha'_T \cdot\alpha'_{R} = \alpha'_{\set{p_1}}\cdot\alpha_T \cdot
\alpha_{R}.
\end{eqnarray*}
Thus, proving \eqnref{views} is equivalent to proving that
\begin{equation}
\label{eqn:alpha'}
\alpha_{\set{p_1}} \leq e^\eps \alpha'_{\set{p_1}}.
\end{equation}

We use the $(t,\eps)$-privacy of protocol $\Pi$ to prove \eqnref{alpha'}.
Let $v_T$ be the messages sent and received by the parties in $T$ in $v$.
As $T$ separates $p_1$ from $R$, the messages in
$v_T$ are all messages in $v$ except for the messages exchanged between parties
in $R$.
The view of $T$ includes the inputs of $T$ in $\myvec{x}$, the messages $v_T$,
and the random inputs $\myvec{r_T}=\set{r_i:p_i \in T}$.
For a set $A$, define $\beta_A$ and $\beta'_A$ as the probabilities that in $\Pi$
in each round the set $A$ with inputs from $\myvec{x}$ and $\myvec{x'}$ respectively
sends messages according to $v_T$ if it gets
messages according to $v_T$ in previous rounds.
Note that $\beta_{\set{p_1}}=\alpha_{\set{p_1}}$
and $\beta'_{\set{p_1}}=\alpha'_{\set{p_1}}$
by the definition of $\P$.
By simulating $p_1$, $T$, $R$ by three parties, where the random inputs of
$T$ are fixed to $\myvec{r_T}$,
and by \lemref{mult},
\begin{eqnarray*}
& \Pr[\View^\Pi_T(\myvec{x}) =(\myvec{x_T},\myvec{r_T},v_T) ] = & \alpha_{\set{p_1}}\cdot \beta_{R}, \quad \mbox{and} \\
& \Pr[\View^\Pi_T(\myvec{x'}) =(\myvec{x_T},\myvec{r_T},v_T) ] = & \beta'_{\set{p_1}}\cdot \beta'_{R} = \alpha'_{\set{p_1}}\cdot \beta_{R}.
\end{eqnarray*}
(recalling that $\myvec{x_T}=\myvec{x'_T}$).
The above probabilities are taken over the random strings of $R$ and $p_1$ when the random strings of $T$ are fixed to $\myvec{r_T}$.
The $(t,\eps)$-differential privacy of $\Pi$ implies that 
\begin{eqnarray*}
 \Pr[\View^\Pi_T(\myvec{x}) =(\myvec{x_T},\myvec{r_T},v_T) ] \le e^{\eps}\Pr[\View^\Pi_T(\myvec{x'}) =(\myvec{x_T},\myvec{r_T},v_T) ].
\end{eqnarray*}
Thus, $\alpha_{\set{p_1}} \leq e^\eps \alpha'_{\set{p_1}}$ and, therefore, $\P$
is $\eps$-differentially private with respect to inputs of lonely parties.
 
\paragraph{Second Step.}
There are at least $n/2$ lonely parties in $\Pi$; without loss of generality, parties $p_1,\dots,p_{n/2}$ are lonely.
We construct a protocol $\P'$ that
$(\gamma,\tau)$-approximates $\SUM_{n/2}$
by executing Protocol $\P$ where 
(i) Party $p_i$, where $1\leq i\leq n/2$, with input $x_i$ sends
messages in $\P'$ as the party $p_i$ with input $x_i$  sends them in $\P$; and
(ii)
In addition, the party $p_1$ in $\P'$ simulates all other $n/2$ parties in $\P$,
that is, for every $n/2 < i \leq n$, it chooses a random input $r_i$ for $p_i$
and in every round it sends to the curator
the same messages as $p_i$ would send with $x_i=0$ and $r_i$.
Since the curator sees the same view in $\P$ and $\P'$ and since the privacy of
lonely parties is protected in $\P$, the privacy of each of the $n/2$
parties in $\P'$ is protected.
Protocol $\P'$, therefore,  
$(\gamma,\tau)$-approximates $\SUM_{n/2}$
(since we fixed $x_i=0$ for $n/2 < i \leq n$ and $\P'$ returns the same output distribution of $\Pi$, which 
$(\gamma,\tau)$-approximates $\SUM_{n}$ for all possible inputs).
\end{proof}

\medskip

We are now ready to state the main theorem of this section.

\begin{theorem}
\label{thm:dist}
Let $0 < \eps \leq 1$. 
There exist constants $\delta>0$ and $\gamma>0$ such that in any $\ell$-round, fixed-communication, 
$(t,\eps)$-differentially private protocol for approximating $\SUM_n$  
that sends at most $n(t+1)/4$ messages the protocol errs 
with probability at least $\gamma$ by 
at least $\frac {\delta \sqrt{n}}{\eps \ell \sqrt{\log \ell}}$.
\end{theorem}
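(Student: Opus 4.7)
The plan is to combine the transformation \lemref{transformation} with the local-model lowerbound \corref{local-sum} in a straightforward way. Suppose for contradiction that there is an $\ell$-round, fixed-communication, $(t,\eps)$-differentially private protocol $\Pi$ for $\SUM_n$ that sends at most $n(t+1)/4$ messages and which is an additive $(\gamma,\tau)$-approximation with $\tau < \frac{\delta\sqrt{n}}{\eps\ell\sqrt{\log \ell}}$ (for an appropriately small constant $\delta$) and error probability smaller than the constant $p$ from \corref{local-sum}. Then we derive a contradiction in two steps.

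First, I would invoke \lemref{transformation} on $\Pi$. This produces an $(\ell+1)$-round, $\eps$-differentially private protocol $\P'$ in the local model that is an additive $(\gamma,\tau)$-approximation of $\SUM_{n/2}$. This step uses only the hypothesis on the number of messages, which guarantees that at least $n/2$ of the parties are lonely, and it is already fully worked out in \lemref{transformation}.

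Second, I would apply \corref{local-sum} to $\P'$. That corollary gives a constant $\delta'>0$ and a constant $p>0$ such that any $(\ell+1)$-round, $\eps$-differentially private, local protocol for $\SUM_{n/2}$ must err with probability at least $p$ by at least
\[
\frac{\delta'\sqrt{n/2}}{\eps(\ell+1)\sqrt{\log(\ell+1)}} \;=\; \Theta\!\left(\frac{\sqrt{n}}{\eps\ell\sqrt{\log\ell}}\right).
\]
Choosing $\delta$ smaller than the implicit constant in the right-hand side, and $\gamma$ smaller than $p$, yields the desired contradiction with the assumed approximation quality of $\Pi$, and hence of $\P'$.

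There is no essential technical obstacle beyond bookkeeping: the two ingredients (the distributed-to-local transformation and the local lowerbound) do all the work, and one only needs to check that replacing $n$ by $n/2$ and $\ell$ by $\ell+1$ changes the bound by at most a constant factor, which it does since $\sqrt{n/2} = \Theta(\sqrt{n})$ and $(\ell+1)\sqrt{\log(\ell+1)} = \Theta(\ell\sqrt{\log\ell})$ for $\ell \geq 1$. The constants $\delta$ and $\gamma$ in the statement are obtained by absorbing these constant factors into the constants $\delta'$ and $p$ coming from \corref{local-sum}.
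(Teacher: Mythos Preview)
Your proposal is correct and follows essentially the same approach as the paper: assume a too-good protocol, apply \lemref{transformation} to obtain an $(\ell+1)$-round local protocol for $\SUM_{n/2}$, and then derive a contradiction with \corref{local-sum}, absorbing the constant-factor changes from $n\to n/2$ and $\ell\to\ell+1$ into the constants. The paper's proof is exactly this, written even more tersely.
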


\begin {proof}
Assume, for sake of contradiction, that there is an $\ell$-round,
$(t,\eps)$-differentially private protocol $\Pi$ for computing $\SUM_n$,
which sends at most $n(t+1)/4$ messages and errs by at most  $\tau = \frac
{\delta \sqrt{n}}{\eps \ell \sqrt{\log \ell}}$ with  probability at
least $1-\gamma$.  By \lemref{transformation} there exists an
$(\ell+1)$-round,  $\eps$-differentially private, local protocol $\P$ for
computing $\SUM_{n/2}$ which errs by  at most $\tau = \frac {\delta
\sqrt{n}}{\eps \ell \sqrt{\log \ell}} =  \frac {\sqrt{2}\delta
\sqrt{n/2}}{\eps \ell \sqrt{\log \ell}}$ with  probability at least
$1-\gamma$. This contradicts \corref{local-sum}.
\end {proof}

\medskip

\thmref{dist} suggests that whenever we require that the error of
a differentially private protocol for approximating $\SUM$
to be of magnitude smaller than $\sqrt{n}/\eps$, there is no reason to
relinquish the simplicity of the natural paradigm for constructing
protocols.  In this case, it is possible to construct relatively simple
efficient SFE protocols, which use $O(nt)$ messages, and compute an
additive $(O(1/\eps),O(1))$-approximation of $\SUM$.

\begin {remark}
It can also be shown that in any $\ell$-round, fixed-communication,
$(t,\eps)$-differentially private protocol computing the $\GAP_{\kappa,\tau}$, 
for any $0 \le \kappa \le n-\tau$, the number of messages sent in the protocol is $\Omega(nt)$,
 for $\tau = \sqrt{n}/\tilde{O}(\ell)$.
To show this, use the ideas similar to those of \lemref{transformation} and apply 
\thmref{local-gap} to assert that any $\ell$-round, fixed-communication,
$(t,\eps)$-differentially private protocol computing the $\GAP_{0,\tau}$, 
the number of messages sent in the protocol is $\Omega(nt)$, for $\tau = \sqrt{n}/\tilde{O}(\ell)$.
Then, using \clmref{reduction-gapK-gap0}, infer that the same is true for general $\kappa$.
\end {remark}

\section{SFE for Symmetric Approximations of Binary-Sum} 
\label{sec:SymmetricFunctions}

In this section we show the advantage of using the alternative paradigm for constructing 
distributed differentially private protocols whenever we allow an $O(\sqrt{n}/\eps)$ approximation.
Recall that it is possible to construct differentially
private protocols for such approximations that use $2n$ messages and are secure against coalitions of size 
up to $t=n-1$ (see \secref{MotivBinarySum}). We next prove, using ideas from Chor and Kushilevitz~\cite{CK93}, that any SFE protocol 
for computing a symmetric approximation for $\SUM_n$,
using less than $nt/4$ messages, has error magnitude $\Omega (n)$. 


We first give the definition of SFE protocols computing a given randomized function $\hat{f}(\cdot)$. Here, again,
we only consider protocols  where all parties are honest-but-curious and compute the {\em same} output. 
The definition is given in the information-theoretic model. 

\begin {definition} [SFE] \label{def:SFE-honest-but-curious} 
Let $\hat{f} : (\{0,1\}^*)^n \to \{0,1\}^*$ be an $n$-ary 
randomized function. Let $\Pi$ be an $n$-party protocol for computing $\hat{f}$. For a coalition $T \su \{1, \ldots,n \}$,
the view of $T$ during an execution of $\Pi$ on $\myvec {x} = (x_1 \ldots x_n)$, denoted 
$\View_{T}({\myvec x})$, is defined as in \defref{dist-dp}, i.e., $\View_T(x_1,\ldots,x_n)$ is the random variable 
containing the inputs of the parties in $T$ (that is, $\set{x_i}_{i\in T}$), the random inputs of the parties in $T$, and 
the messages that the parties in $T$ received during the execution of the protocol with inputs 
${\myvec x} = (x_1,\ldots,x_n)$. 

We say that $\Pi$ is a $t$-secure protocol for $\hat{f}$ if there exists a randomized function, denoted $S$, 
such that for every $t' \le t$, for every coalition $T = \{i_1, \ldots,i_{t'} \}$, and for every inputs 
$\myvec {x} = (x_1 \ldots x_n)$, the following two random variables are identically distributed:

\begin{itemize}
\item $\set{S\vect{T,\vect{x_{i_1}, \ldots,x_{i_{t'}}}, o}, o}$, where $o$ is obtained first by sampling $\hat{f}(\myvec {x})$ (recall that $\hat{f}$ is a randomized function) and then $S$ is applied to $\vect{T,\vect{x_{i_1}, \ldots,x_{i_{t'}}}, o}$.
                        
\item $\set{\View_{T}({\myvec x}), \Output^{\Pi}(\View_{T}({\myvec x}))}$, where $\Output^{\Pi}(\View_{T}({\myvec x}))$ denotes the output during the execution represented in $\View_{T}({\myvec x})$.
\end{itemize}
\end{definition}

\begin{claim}
\label{clm:v_T}
Let $\vecy$ and $\vecz$ be two inputs and $T$ be a coalition of size at most $t$ such that $\hat{f}({\vecy})$ and $\hat{f}({\vecz})$ are identically distributed
and $y_i=z_i$ for every $i \in T$.
In every $t$-secure protocol for $\hat{f}$,
for any possible view $v_T$ of the set $T$,
it holds that $\pr\left[\View_{T}({\myvec y}) = v_T\right] = \pr\left[\View_{T}({\myvec z}) = v_T\right]$.
\end{claim}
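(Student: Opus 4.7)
The plan is to leverage the simulation-based definition of SFE security directly. By \defref{SFE-honest-but-curious}, since $\Pi$ is $t$-secure for $\hat{f}$, there exists a simulator $S$ such that for any coalition $T$ with $|T|\le t$ and any input vector $\myvec{w}$, the joint distribution $(\View_T(\myvec{w}), \Output^\Pi(\View_T(\myvec{w})))$ is identically distributed to $(S(T, \myvec{w}_T, o), o)$, where $o$ is sampled from $\hat{f}(\myvec{w})$ and $\myvec{w}_T$ denotes the restriction of $\myvec{w}$ to coordinates in $T$. In particular, projecting away the second coordinate, the marginal distribution of $\View_T(\myvec{w})$ equals the marginal distribution of $S(T,\myvec{w}_T,o)$ where $o\sim\hat{f}(\myvec{w})$.

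The key observation is now that the simulator $S$ receives as input only three pieces of data: the identity of the coalition $T$, the inputs $\myvec{w}_T$ of the coalition members, and the output $o$. First I would point out that by the hypothesis $y_i = z_i$ for every $i\in T$, we have $\vecy_T = \vecz_T$. Second, by the hypothesis that $\hat{f}(\vecy)$ and $\hat{f}(\vecz)$ are identically distributed, the random output $o$ fed into $S$ has the same distribution in both experiments. Consequently, the random variable $S(T,\vecy_T, o_\vecy)$ (with $o_\vecy \sim \hat{f}(\vecy)$) is identically distributed to $S(T,\vecz_T, o_\vecz)$ (with $o_\vecz\sim \hat{f}(\vecz)$), because both are obtained by applying the same randomized function $S$ to identically distributed inputs.

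Putting the two facts together, for every possible view $v_T$,
\begin{equation*}
\Pr[\View_T(\vecy) = v_T] = \Pr[S(T,\vecy_T,o_\vecy) = v_T] = \Pr[S(T,\vecz_T,o_\vecz) = v_T] = \Pr[\View_T(\vecz) = v_T],
\end{equation*}
which is the desired equality. There is no real obstacle here: the claim is essentially a direct unpacking of simulatability combined with the two hypotheses, and the only care needed is to make sure that we invoke the marginal over views (rather than the joint distribution including the output) so that the argument goes through cleanly even though $\hat{f}$ is randomized.
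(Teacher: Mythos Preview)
Your proposal is correct and follows essentially the same approach as the paper's proof: both invoke the simulator $S$ from \defref{SFE-honest-but-curious}, observe that $\vecy_T=\vecz_T$ and that $\hat f(\vecy)$, $\hat f(\vecz)$ are identically distributed, and conclude that the simulated (hence real) views coincide in distribution. The only cosmetic difference is that the paper argues via the joint distribution $(\View_T,\Output^\Pi)$ and then implicitly marginalizes, whereas you marginalize first; the substance is the same.
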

\begin{proof}
Let $T=\set{i_1,\dots,i_{t'}}$ for $t'\leq t$.
The two random variables 
$\set{S \vect{T,\left(y_{i_1}, \ldots,y_{i_{t'}}\right), o}, o}$
and $\set{S \vect{T,\vect{z_{i_1}, \ldots,z_{i_{t'}}}, o}, o}$ 
(as defined in \defref{SFE-honest-but-curious}) are identically distributed
since $\hat{f}(\myvec y)$ and $\hat{f}(\myvec z)$ are identically distributed.
Hence, by the $t$-security of the protocol, so do 
$\set{\View_{T}({\myvec y}), \Output^{\Pi}(\myvec {y})}$ and $\set{\View_{T}({\myvec z}), \Output^{\Pi}(\myvec {z})}$.
\end{proof}

\begin{definition}[Symmetric Randomized Function]\label{def:symmetricF} 
We say that a randomized function $\hat{f}$ over domain $D$ with range $R$ is {\em symmetric}
if it does not depend on the ordering on the coordinates of the input, i.e., for every 
$(x_1, \dots, x_n) \in D^n$ and every permutation $\pi : [n] \rightarrow [n]$ 
the distributions (over $R$) implied by $\hat{f}(x_1, \dots, x_n)$ and by $\hat{f}(x_{\pi(1)}, \dots, x_{\pi(n)})$
are identical.
\end{definition}

Note that allowing $O(nt)$ messages, it is fairly straightforward to construct a symmetric $(t,\eps)$-differentially private protocol with constant ($O(1/\epsilon)$) additive error for $\SUM_n$, using the natural paradigm with, say, the $\eps$-private approximation described in \exampleref{binary-sum}. The following lemma shows that $\Omega(nt)$ messages are essential whenever a symmetric approximation for $\SUM_n$ is computed by an SFE protocol, even if it is not required to preserve differential privacy.

\begin{lemma}
Let $\hat{f}$ be a symmetric randomized function approximating $\SUM_n$ such that for every input vector
$\myvec x$, it holds that
$\Pr\left[\left | \hat{f}(\myvec x) - \SUM(\myvec x) \right | < n/4\right] < 1/2$, 
and let $t \le n-2$. Every fixed-communication $t$-secure protocol $\Pi$ 
for computing $\hat{f}$ uses at least ${n(t+1)}/{4}$ messages\footnote{We note 
that the lemma does not hold for non-symmetric functions. For example, 
we can modify the bit flip protocol described in \secref{motivatingExamples}
to an SFE protocol for a non-symmetric function, 
retaining the number of messages sent (but not their length): in \stpref{C} $p_1$ also sends 
${\myvec z} = (z_1,\ldots,z_n)$, and in \stpref{each} each $p_i$ locally outputs $\hat{f} + {\myvec z} 2^{-n}$, 
treating ${\myvec z}$ as an $n$-bit binary number.}.
\end {lemma}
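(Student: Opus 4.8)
The plan is to follow the Chor--Kushilevitz strategy: a protocol with few messages must contain many \emph{lonely} parties, and in a $t$-secure protocol for a \emph{symmetric} randomized function the input of a lonely party cannot influence the output; since $\SUM_n$ depends on every input, such a protocol cannot approximate it well. Concretely, assume toward a contradiction that $\Pi$ is a fixed-communication $t$-secure protocol for $\hat f$ sending fewer than $n(t+1)/4$ messages. Since the communication pattern of $\Pi$ is fixed, every active channel carries at least one message, so $\Pi$ uses fewer than $n(t+1)/4$ channels and hence $\sum_i \deg(p_i) < n(t+1)/2$, where $\deg(p_i)$ is the number of channels incident to $p_i$. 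A popular party has degree at least $t+1$, so fewer than $n/2$ parties are popular, and the set $L$ of lonely parties satisfies $|L| > n/2$.

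The heart of the argument (Step~2) is to show that, because $\hat f$ is symmetric, the full-transcript distribution of $\Pi$ is unchanged when the input of any lonely party is flipped. Fix a lonely $p_i$, let $T_i$ be its set of neighbors (so $|T_i|\le t$ and, since $t\le n-2$, the set $R_i:=[n]\setminus(\{i\}\cup T_i)$ is non-empty), and fix any transcript $c$ with restriction $c_i$ to the channels incident to $p_i$. Since $T_i$ \emph{separates} $p_i$ from $R_i$, \lemref{mult} lets us factor $\Pr[\text{transcript}=c \mid {\myvec x}]=\alpha_i^{c_i}(x_i)\cdot(\text{a factor depending on }x_{T_i})\cdot(\text{a factor depending on }x_{R_i})$, the $\{i\}$-part depending on $x_i$ only through $\alpha_i^{c_i}(x_i)$. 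On the other hand, $\hat f$ symmetric over $\{0,1\}$ depends only on $\SUM$, so any two inputs ${\myvec y},{\myvec z}$ with $\SUM({\myvec y})=\SUM({\myvec z})$ that agree on $T_i$ satisfy $\hat f({\myvec y})\equiv\hat f({\myvec z})$ and hence, by \clmref{v_T}, induce the same distribution on $\View_{T_i}$; moreover $\View_{T_i}$ determines $c_i$ (the messages on $p_i$'s channels are all sent or received by $T_i$). Applying this to the inputs obtained by moving a single $1$ between $p_i$ and a party $p_k\in R_i$ — and, more generally, between $p_i$ and arbitrary sum-values on $R_i$ — yields, for each $c_i$, a family of identities relating $\alpha_i^{c_i}(0)$, $\alpha_i^{c_i}(1)$ and the total probability mass of $R_i$'s behaviour at each value of the sum over $R_i$; combined with the normalization of $p_i$'s response distribution, I would argue that these force $\alpha_i^{c_i}(0)=\alpha_i^{c_i}(1)$ for every relevant $c_i$. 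By \lemref{mult} again, flipping $x_i$ then leaves the transcript distribution — and in particular the output — unchanged.

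Finally I would derive the contradiction. Iterating Step~2 over all of $L$, the output distribution of $\Pi$ depends only on the inputs of the fewer than $n/2$ popular parties; in particular $\Output^{\Pi}({\myvec 0})$ and $\Output^{\Pi}(\mathbf{1}_L)$, where $\mathbf{1}_L$ is the input that is $1$ exactly on $L$, are identically distributed. But $t$-security (\defref{SFE-honest-but-curious}) forces $\Output^{\Pi}({\myvec x})\equiv\hat f({\myvec x})$, so $\hat f({\myvec 0})\equiv\hat f(\mathbf{1}_L)$; writing $Y$ for this common random variable, the event that $Y$ lies within $n/4$ of $0$ and the event that $Y$ lies within $n/4$ of $|L|$ are disjoint because $|L|>n/2$, so they cannot both have probability exceeding $1/2$, contradicting the approximation quality assumed for $\hat f$ at the inputs ${\myvec 0}$ and $\mathbf{1}_L$. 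This last step is exactly where symmetry is essential: for a non-symmetric $\hat f$ the protocol may reveal, e.g., the exact contribution of each party (as in the bit-flip variant in the footnote), and \clmref{v_T} then gives nothing for the swap of $x_i$ with $x_k$.

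I expect the main obstacle to be the exact equality $\alpha_i^{c_i}(0)=\alpha_i^{c_i}(1)$ in Step~2. Turning the identities produced by \clmref{v_T} into this equality requires care with the formal (adaptive) definition of $\alpha_i^{c_i}(\cdot)$ and with transcripts $c_i$ on which some $\alpha_i^{c_i}(\cdot)$ vanishes; one must also handle the fact that the neighbor sets $T_i$ of distinct lonely parties overlap (so the flips are performed one at a time, re-running Step~2 relative to the current configuration), and the case where $R_i$ is small (i.e., $t$ close to $n-2$), which may need an additional observation such as exploiting further lonely parties lying outside $T_i$.
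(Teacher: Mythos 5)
Your overall architecture matches the paper's proof: counting channels to get more than $n/2$ lonely parties, invoking \clmref{v_T} for two inputs of equal sum that agree on the neighbor set $T_i$ (swapping a $1$ between the lonely $p_i$ and some $p_k\in R_i$, which exists since $t\le n-2$), factoring transcript probabilities via \lemref{mult}, and finishing by comparing $\myvec{0}$ with the indicator of the lonely set, whose sums differ by more than $n/2$, against the $n/4$-approximation guarantee. That endgame is fine. The problem is exactly the step you flag as the main obstacle: your proposed derivation of $\alpha_i^{c_i}(0)=\alpha_i^{c_i}(1)$ from ``per-transcript identities plus normalization of $p_i$'s response distribution'' does not go through as described. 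What \clmref{v_T} and \lemref{mult} give you, after summing over the completions consistent with a fixed channel-history $c_i$ of $p_i$, is only the aggregate identity $\alpha_i^{c_i}(0)\,Q_1(c_i)=\alpha_i^{c_i}(1)\,Q_0(c_i)$, where $Q_w(c_i)$ is the probability that the rest of the network (with $x_k=w$) interacts consistently with $c_i$ when $p_i$ is replaced by an automaton playing $c_i$. Symmetry does not give $Q_0(c_i)=Q_1(c_i)$ directly (those two configurations have different sums), so the identity alone does not decouple; and normalization of $p_i$'s own responses never enters usefully, while summing over $R_i$'s responses fails to normalize because which continuations of $T_i\cup R_i$ are compatible with a fixed $c_i$ depends adaptively on the interleaved, possibly randomized, queries of $T_i$ -- precisely the adaptivity issue you mention but do not resolve.

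The paper closes this gap with a prefix (round-by-round) argument rather than a normalization argument: since equality of probabilities under the two swapped inputs holds for every full $T$-transcript, summing over completions gives it for every \emph{prefix}, in particular for every history $h_j$ seen by $p_i$ on its channels. Writing $h_j=h_{j-1},m_j$ with $m_j$ a message sent by $p_i$, the factorization shows that the ``rest of the network'' factor is the same for $h_j$ and $h_{j-1}$, so the ratio of their probabilities is exactly $p_i$'s conditional probability of sending $m_j$ given $h_{j-1}$; equality of numerators and denominators under the two inputs then forces this conditional behavior to coincide for $x_i=0$ and $x_i=1$, and multiplying over the rounds yields $\alpha_i^{c_i}(0)=\alpha_i^{c_i}(1)$ (with the usual care about zero-probability histories). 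Your proof needs this telescoping step, or an equivalent decoupling; as written, the central claim of Step~2 is asserted rather than proved, and the tool you propose for it is the wrong one.
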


\begin{proof}
Let $\Pi$ be a $t$-secure protocol computing $\hat{f}$ using less than ${n(t+1)}/{4}$ messages. 
Then, there are at least $n/2$ lonely parties in $\Pi$. The intuition for the proof is that a 
lonely party does not affect the computation, since its neighboring set, being smaller than $t+1$, would otherwise be able to infer information about its input. The proof is given in two steps. In the first step, we show that for any given lonely party $p_i$, for any fixed inputs for all other parties, and for any transcript $c$ of the protocol, the probability of $c$ being the transcript of the protocol when $x_i=0$ is exactly the same as the probability of $c$ being the transcript of the protocol when $x_i=1$. In the second step of the proof, we use this to show that with probability at least $1/2$, the protocol errs by $n/4$.

Without loss of generality, assume $p_1$ is lonely and 
assume $p_2$ is not a neighbor of $p_1$. Let $T$ be the set of $p_1$'s neighbors and let
$R=\set{p_1,\dots,p_n}\setminus (T\cup\set{p_1})$ (in particular, $p_2 \in R$). 
Recall that for a transcript $c$ we denote by $\alpha_1^c(x_1)$, the probability that $p_1$ is consistent with 
$c$ with input $x_1$, namely, the probability that $p_1$ with input $x_1$ sends at each
round messages according to $c$, provided it received all messages according to $c$ in previous rounds.
Our goal in the first part of the proof is to prove that for any transcript of the protocol $c$, it holds that 
$\alpha_1^c(0) = \alpha_1^c(1)$.
Toward this end, we pursue the following proof structure.
\begin{itemize}
\item
We first consider
two inputs $\myvec z$ and $\myvec y$ such that $\SUM(\myvec z) =
\SUM(\myvec y)$, $y_i=z_i$ for every $i \in T$,
but $y_1 = 0$ while $z_1 = 1$. For every communication $c$ exchanged in $\Pi$, denote $c_T$ to be the messages sent
and received by the parties in $T$.
By \clmref{v_T}, since $\hat{f}$ is symmetric, the probability of $c_T$ is the
same with $\myvec z$ and with $\myvec y$.
\item
We simulate the protocol
$\Pi$ by a three-party protocol $\Pi'$, where the parties are $p_1$, $T$,
and $R$, and each one of them simulates the respective set of parties in
$\Pi$.  We then use \lemref{mult} to
write the probability that $c_T$ is the communication exchanged in $\Pi$ as a
product of $\alpha_1^{c_T}(x_1),
\alpha_T^{c_T}(\myvec{x_T})$, and  $\alpha_R^{c_T}(\myvec {x_R})$, where
$\myvec {x_T}$ (respectively, $\myvec {x_R}$) are the inputs of parties in
$T$ (respectively, in $R$).
We conclude that
$$\alpha_1^{c_T}(y_1)\cdot\alpha_T^{c_T}(\myvec{y_T})\cdot\alpha_R^{c_T}({\myvec y_R}) = \alpha_1^{c_T}(z_1)\cdot\alpha_T^{c_T}(\myvec{z_T})
\cdot\alpha_R^{c_T}({\myvec z_R}).$$
Furthermore,
$\alpha_T^{c_T}(\myvec{y_T})=\alpha_T^{c_T}(\myvec{z_T})$ (since $\myvec{y_T}=\myvec{z_T}$), thus, 
$$\alpha_1^{c_T}(y_1)\cdot\alpha_R^{c_T}({\myvec y_R}) = \alpha_1^{c_T}(z_1)\cdot\alpha_R^{c_T}({\myvec z_R}).$$
\item
We then assert, by considering all prefices of
$c_T$, that each factor of these two  multiplications is the same in both
cases and hence $\alpha_1^c(0) = \alpha_1^{c_T}(0) = \alpha_1^{c_T}(1) =
\alpha_1^c(1)$.
\end{itemize}

\paragraph{Formal proof.}
Fix any inputs $x_3, \ldots, x_n$ for the parties $p_3,\dots,p_n$.
Let $\myvec{y}$ be the input vector
$$ y_1 = 0, y_2 = 1, \mbox{and}~ y_k = x_k~\mbox{for}~3\leq k \leq n,$$
and let $\myvec{z}$ be the input vector 
$$ z_1 = 1, z_2 = 0, \mbox{and}~ z_k = x_k~\mbox{for}~3\leq k \leq n.$$
We first claim that the distribution over the views of $T$ when the protocol is executed with $\myvec y$
is the same as when the protocol is executed with $\myvec z$.
As $\SUM(\myvec y) = \SUM(\myvec z)$ and $\hat{f}$ is symmetric,
$\hat{f}(\myvec y)$ and $\hat{f}(\myvec z)$ are identically distributed.
Hence, by \clmref{v_T}, for any possible view $v_T$ of the set $T$,
it holds that $\pr\left[\View_{T}({\myvec y}) = v_T\right] = \pr\left[\View_{T}({\myvec z}) = v_T\right]$. 
Thus, since the view of $T$ contains the transcript $c_T$ of messages sent between the parties
in $T$ and the parties in $\set{p_1} \cup R$, we have that for any such possible transcript $c_T$, the probability
that the parties send messages according to $c_T$ is the same when the protocol is executed with $\myvec y$
and when the protocol is executed with $\myvec z$.
Furthermore, for any possible prefix $c'_T$ of any transcript $c_T$ of $T$, 
the probability of messages sent according to $c'_T$ when executing $\Pi$ with input $\myvec{y}$ is the same as when 
executing $\Pi$ with input $\myvec{z}$. This is true as this probability is merely the sum over the probabilities of 
all transcripts completing $c'_T$.

Without loss of generality, we can analyze the execution of the protocol as if at each round only a single message 
is sent by a single party. Let $j$ be such that $p_1$ sends a message in round $j$ and denote by
$h_j = h_{j-1},m_j$, the prefix of $c_T$ also viewed by $p_1$ (messages sent or received by $p_1$) in the first $j$ 
rounds, where $h_{j-1}$ is the history of messages viewed by $p_1$ in the first $j-1$ 
rounds, and $m_j$ is the message $p_1$ sends in round $j$, according to $c_T$. 
By the argument above, the probabilities of $h_{j-1}$ being seen by $p_1$ are the same when the protocol is executed 
with $\myvec y$ and when the protocol is executed with $\myvec z$ and the probabilities of $h_{j}$ being seen by $p_1$ 
are the same when the protocol is executed with $\myvec y$ and when the protocol is executed with $\myvec z$.
Thus, the probabilities of $p_1$ sending $m_j$
having seen message history $h_{j-1}$ are the same when $x_1 = 0$ and when $x_1 = 1$. Since the probability
of $p_1$ being consistent with a view $c_T$ (of $T$) is the product of the probabilities that it is consistent at each 
round, we have $\alpha_1^{c_T}(0) = \alpha_1^{c_T}(1)$. Let $c$ be a full transcript of the protocol, and $c_T$ be its
restriction to messages sent between parties in $T$ and parties in $\{p_1\} \cup R$. Since $p_1$ does not see
any message in $c$ that is not in $c_T$, it holds for every $x_1$ that $\alpha_1^{c}(x_1) = \alpha_1^{c_{T}}(x_1)$.
Thus, $\alpha_1^{c}(0) = \alpha_1^{c}(1)$.

Hence, we proved that for any lonely party $p_i$, and any full transcript of the protocol $c$, it holds that
$\alpha_i^{c}(0) = \alpha_i^{c}(1)$. Consider the all zero input vector and the input vector $\myvec {x}$
such that $x_i = 1$ if and only if $p_i$ is lonely. By \lemref{mult} we have that for any given full 
transcript $c$, the probability of $c$ being exchanged with $\myvec {0}$ is exactly the probability 
of $c$ being exchanged with $\myvec {x}$. Thus, if with probability at least $1/2$, when executing the protocol
with $\myvec {0}$, the exchanged transcript implies a value less than $n/4$, then with probability at least $1/2$, 
the protocol errs by at least $n/4$ when executed with $\myvec {x}$. Otherwise, with probability at least $1/2$, 
the protocol errs by at least $n/4$ when executed with $\myvec {0}$. 
\end {proof}      


\paragraph{Acknowledgments}

We thank Adam Smith, Yuval Ishai, and Cynthia Dwork for conversations related to the topic of this paper. 
This research is partially supported by the Frankel Center for Computer Science, and by the Israel Science Foundation (grant No.\ 860/06).

\bibliographystyle{plain}


\end{document}